\newcommand{\tr}{\,{\rm Tr}\,}
\newcommand{\ID}{1\!\!1}
\def\eps{\varepsilon}
\def\be{\begin{equation}}
\def\ee{\end{equation}}
\def\bea{\begin{eqnarray}}
\def\eea{\end{eqnarray}}
\def\<{\langle}
\def\>{\rangle}
\def\ddz{\frac{\rm d}{{\rm d}z}}
\def\ddl{\frac{\rm d}{{\rm d}\lambda}}
\def\nn{\nonumber}
\def\Tr{{\rm Tr}}
\def\one#1{#1^{\raise8pt\hbox{$\scriptstyle\!\!\!\!1$}}\,{}}
\def\two#1{#1^{\raise8pt\hbox{$\scriptstyle\!\!\!\!2$}}\,{}}
\def\onetwo#1{#1^{\raise6pt\hbox{$\scriptstyle\!\!\!\!\!{12}$}}\,{}}
\def\otim{\mathop{\otimes}}
\newtheorem{theorem}{Theorem}[section]
\newtheorem{lm}[theorem]{Lemma}
\newtheorem{remark}[theorem]{Remark}
\theoremstyle{definition}
\newtheorem{example}[theorem]{Example}
\theoremstyle{remark}
\begin{document}

\title[Teichmuller spaces as degenerated symplectic leaves]
{Teichm\"uller spaces as degenerated symplectic leaves in
Dubrovin--Ugaglia Poisson manifolds}

\author{Leonid Chekhov$^{\ast}$}\thanks{$^{\ast}$Steklov Mathematical Institute, ITEP, and  Laboratoire Poncelet,
Moscow, Russia.}
\author{Marta Mazzocco$^\star$}\thanks{$^\star$School of Mathematica, Loughborough University, UK}

\maketitle

\vspace{-7cm}

\begin{center}
\hfill ITEP/TH-15/11
\end{center}

\vspace{7cm}

\rightline{\it To Boris Dubrovin in occasion of his sixtieth birthday.}

\begin{abstract}
In this paper we study the Goldman
bracket between geodesic length functions both on a Riemann
surface $\Sigma_{g,s,0}$ of genus $g$ with $s=1,2$ holes and on a
Riemann sphere $\Sigma_{0,1,n}$ with one hole and $n$ orbifold
points of order two. We show that the corresponding Teichm\"uller spaces
 $\mathcal T_{g,s,0}$  and $\mathcal T_{0,1,n}$ are realised as real slices of
 degenerated symplectic leaves  in the Dubrovin--Ugaglia Poisson algebra  of
upper--triangular matrices $S$ with $1$ on the diagonal.
\end{abstract}

\section{Introduction}

In this paper we study some special symplectic leaves in the
Poisson algebra (\ref{poissonu}) of upper--triangular matrices $S$
with $1$ on the diagonal. This algebra appears as the
semi--classical limit of  the famous Nelson--Regge algebra
in $2+1$-dimensional quantum gravity \cite{NR,NRZ},  and in
Chern--Simons theory as Fock--Rosly bracket \cite{FR}. At classical
level, this algebra was discovered in the context of Frobenius
manifold theory by Dubrovin and Ugaglia \cite{Dub7,Ugaglia} and in
the study of non--symmetric bilinear forms by Bondal \cite{Bondal}.

In this paper we adopt the isomonodromic deformations
perspective. According to Dubrovin's isomonodromicity theorem part
III  \cite{Dub7},  the metric, the flat coordinates, the
pre--potential and the structure constants of a $n$ dimensional
semi--simple Frobenius manifold are given by the space of parameters
${\bf u}=(u_1,\dots,u_n)$ together with an $n\times n$
skew-symmetric matrix function $V({\bf u})$ such that the linear
differential operator
$$
\Lambda(z):= \ddz -U -\frac{V({\bf u})}{z},\qquad U=\rm{diagonal}({\bf u}),
$$
has constant
monodromy data as $(u_1,\dots,u_n)$ vary in the configuration space of $n$ points. Generically, the monodromy data of
$\Lambda(z)$ are encoded in the so-called {\it Stokes matrix}\/
$S$, an upper triangular matrix with $1$ on the diagonal.

It turns out that, although the monodromy map
$$
V(u) \to S,
$$
is given by complicated transcendent functions, the Poisson bracket on the space of Stokes matrices is given by very simple  quadratic formulae:
\begin{eqnarray}\label{poissonu}
&&
\left\{s_{ik},s_{jl}\right\}=0,\quad\hbox{for}\,i<k<j<l,\nn
\\&&
\left\{s_{ik},s_{jl}\right\}=0,\quad\hbox{for}\, i<j<l<k,\nn
\\&&
\left\{s_{ik},s_{jl}\right\}={i\pi}\left(s_{ij}s_{kl}-s_{il}s_{kj}\right),\quad\hbox{for}\, i<j<k<l,
\\&&
\left\{s_{ik},s_{kl}\right\}=\frac{i\pi}{2}\left(s_{ik}s_{kl}-2s_{il}\right),\quad\hbox{for}\, i<k<l,\nn
\\&&
\left\{s_{ik},s_{jk}\right\}=-\frac{i \pi}{2}\left(s_{ik}s_{jk}-2s_{ij}\right),\quad\hbox{for} \, i< j<k,\nn
\\&&
\left\{s_{ik},s_{il}\right\}=-\frac{i\pi}{2}\left(s_{ik}s_{il}-2s_{kl}\right),\quad\hbox{for} \, i<k<l.\nn
\end{eqnarray}
This bracket was obtained in \cite{Dub7} in the case $n=3$, then for
any $n>3$ in \cite{Ugaglia}, and for this reason it is called the {\it
Dubrovin--Ugaglia} bracket.

The same bracket appeared in Teichm\"uller theory as the Goldman
bracket \cite{Gold} between geodesic length functions both on a Riemann
surface $\Sigma_{g,s,0}$ of genus $g$ with $s=1,2$ holes and on a
Riemann sphere $\Sigma_{0,1,n}$ with one hole and $n$ orbifold
points of order two. Let us denote the two Teichm\"uller spaces by
$\mathcal T_{g,s,0}$, $s=1,2$ and $\mathcal T_{0,1,n}$ respectively.
These are {\it real symplectic manifolds} of dimension respectively
$$
\dim_{\mathbb R}\left(\mathcal T_{g,s,0}
\right)= \left\{\begin{array}{ll}
3n-7 &\hbox{for $n$ odd}\\
3n-8 &\hbox{for $n$ even,}\\
\end{array}\right.\,\hbox{with}\,\, g=\left[\frac{n-1}{2}\right], \,
s=\left\{\begin{array}{ll}
1&\hbox{for $n$ odd,}\\
2&\hbox{for $n$ even,}\\
\end{array}\right.
$$
and
$$
\dim_{\mathbb R}\left(\mathcal T_{0,1,n}\right)= 2(n-2),
$$
while the generic symplectic leaves $\mathcal L_{\small{generic}}$ in the Dubrovin--Ugaglia bracket have dimension
$$
\dim_{\mathbb C}\left(\mathcal L_{{generic}}\right)=\frac{n(n-1)}{2}-\left[\frac{n}{2}\right].
$$
It is natural to ask whether the Teichm\"uller spaces arise as real
slices of some subvarieties of a generic leaf or of a degenerated
leaf. In this paper, we prove that in the both cases the Teichm\"uller
spaces correspond to degenerated symplectic leaves whose complex dimension
is equal to the real dimension of the Teichm\"uller space itself (see Theorem \ref{th:main0}).
As a consequence, we
give flat coordinates on such degenerated symplectic leaves by introducing a suitable complexification of the shear coordinates.

The paper is organised as follows. In Section \ref{se:DU} we recall
some  facts about the isomonodromic deformations of the
operator $\Lambda(z)$ and about the Dubrovin--Ugaglia bracket. This part
is mainly a review, apart perhaps the minor Remark \ref{rm:nu}. In
Section \ref{se:geo}, we review some basics on Teichm\"uller theory
recalling the characterization of the Stokes matrices whose entries
arise as geodesic length functions  on a Riemann surface $\Sigma_{g,s,0}$ of
genus $g$ with $s=1,2$ holes and on a Riemann sphere
$\Sigma_{0,1,n}$ with one hole and $n$ orbifold points of order two.
Section \ref{se:bondal} is original and contains the
characterization of the symplectic leaves arising in Teichm\"uller
theory, including the proof of Theorem \ref{th:main0} stating
that in the both cases (i.e., for the  Riemann
surface $\Sigma_{g,s,0}$ of genus $g$ with $s=1,2$ holes and Riemann
sphere $\Sigma_{0,1,n}$ with one hole and $n$ orbifold points of
order two) the Teichm\"uller spaces $\mathcal T_{g,s,0}$  and
$\mathcal T_{0,1,n}$  are the real slices of  degenerated symplectic leaves
complex  dimension equal to the real dimension of the Teichm\"uller
space itself. In subsection \ref{ss:Min} we discuss an interesting
interpretation in terms of a $n$ particle model in Minkowski space
and in subsection \ref{suse:c} we discuss the complexification of
the shear coordinates. Section \ref{se:con} contains a  heuristic
discussion about some minor progress towards the characterisation of
the  Frobenius manifold structure on the Teichm\"uller spaces
$\mathcal T_{g,s,0}$  and $\mathcal T_{0,1,n}$.

\vskip 2mm \noindent{\bf Acknowledgements.}  The authors are
grateful to Boris Dubrovin, who put them in contact and gave them
many helpful suggestions. We would like to thank also J\"orgen Andersen,
Alexei Bondal, Bob Penner and Vasilisa Shramchenko for many
enlighting conversations. This research  was supported by EPSRC  ARF
EP/D071895/, by the Marie Curie training network ENIGMA, by the
ESF network MISGAM, by the Russian Foundation for Basic Research under
the grant No.~10-01-92104-YaF\_a, 11-02-90453-Ukr\_a, and~10-02-01315-a, by
the Ministry of Education and Science of the Russian Federation under contract
02.740.11.0608, by the Program of Supporting
Leading Scientific Schools No.~NSh-8265.2010.1, and by the Scientific Program
Mathematical Methods of Nonlinear Dynamics.

\section{Dubrovin--Ugaglia bracket}\label{se:DU}

In this Section, we recall some  facts about the monodromy data of
the operator $\Lambda(z)$,  its monodromy preserving deformations,
and the construction of the Dubrovin--Ugaglia bracket by the
so--called duality \cite{Dub8} which allows to map $\Lambda(z)$ to a
Fuchsian differential operator.

The Dubrovin--Ugaglia bracket is a Poisson bracket on the group of
upper--triangular matrices $S$ with $1$ on the diagonal. These
matrices $S$ arise as {\it monodromy data}\/ of the following system
of $n$ first order ODEs
\begin{equation}
\ddz Y= \left(U+\frac{V}{z}\right)Y
\label{irreg}\end{equation}
where $U={\rm diagonal}(u_1,\dots,u_n)$, $(u_1,\dots,u_n)\in X_n$,
$$
X_n=\left\{(u_1,\dots, u_n)\in\mathbb C^n\, | u_i\neq u_j \, \hbox{for}\, i\neq j
\right\}
$$
and $V=-V^{T}$ is a skew symmetric $n\times n$ matrix with eigenvalues $\mu_1,\dots,\mu_n$.

\subsection{Monodromy data}\label{se:mon-data}

A general description of monodromy data of linear systems of ODE can be
found in \cite{MJ1,MJ2,MJU}. Here we use the same notations as in \cite{Dub7}, where most results of this sub--section are proved.

We fix a real number $\varphi\in[0,2\pi[$ and consider the open subset ${\mathcal U}\in X_n$ such that
the rays $L_1,...,L_n$ defined by
\begin{equation}\label{eq:rays}
L_j:=\{u_j+i\rho e^{-i\varphi}\,|\, 0\leq\rho<\infty\}
\end{equation}
do not intersect. We assume that the points
$(u_1,\dots,u_n)\in{\mathcal U}$  are ordered in such a way that the
rays $L_1,\dots,L_n$ exit from infinity in counter-clockwise order.

In \cite{Dub7} it was proved that  for a fixed line $l$
$$
l:=\{\arg(z)=\varphi\},
$$
there exists $\eps>0$ small enough, $Z\in{\mathbb R}$ large enough,
two sectors $\Pi_L$ and $\Pi_R$ defined as
\begin{equation}
\begin{array}{cc}
&\Pi_R=\{z: \arg(l)-\pi-\eps<\arg(z)<\arg(l)+\eps,\, |z| > |Z|\}\\
&\Pi_L=\{z: \arg(l)-\eps<\arg(z)<\arg(l)+\pi+\eps,\, |z| > |Z|\}\\
\end{array}
\end{equation}
and two unique fundamental solutions $Y_L(z)$ in $\Pi_L$ and
$Y_R(z)$ in $\Pi_R$ such that
\begin{equation}
Y_{L,R}\sim\left(\ID+{\mathcal O}\left({1\over z}\right)\right)
 e^{z U},\quad\hbox{as }z\to\infty,\quad z\in \Pi_{L,R}.
\label{eq:fun-sol-LR}
\end{equation}
In the narrow sectors
$$
\begin{array}{cl}
\Pi_+:=&\{z|\, \varphi-\varepsilon< \arg z<\varphi+\varepsilon\}\\
\Pi_-:=&\{z|\,\varphi-\pi-\varepsilon< \arg z<\varphi-\pi+\varepsilon\}\\
\end{array}
$$
obtained by the intersection of $\Pi_L$ and $\Pi_R$, we have two
fundamental matrices with the same asymptotic behaviour
(\ref{eq:fun-sol-LR}). They are related by multiplication by a
constant invertible matrix
$$
Y_L (z) =Y_R (z) S_+, ~~z\in \Pi_+.
$$
$$
Y_L (z) =Y_R (z) S_-, ~~z\in \Pi_-.
$$
The matrices $S_+$ and $S_-$ are called {\it Stokes matrices.}\/ Due to the skew symmetry of $V$, they satisfy the following relation
$$
S_-^T=S_+:=S.
$$
Thanks to the choice of the order of $u_1,\dots,u_n$, $S$ is upper triangular with $1$ on the diagonal.

Near the regular singular point $0$, there exists a fundamental matrix of the system
(\ref{irreg}) of the form
\begin{equation}
Y_0(z)=\left({\Gamma}+{\mathcal O}(z)\right)z^{\mu} z^{{R}},\quad\hbox{as}\quad
 z\rightarrow 0,
\label{eq:fun-sol0}
\end{equation}
where a branch cut between zero and infinity has been fixed along the negative part $l_-$ of $l$, the matrix $\Gamma$ is the eigenvector matrix of $V$,
$V {\Gamma}={\Gamma} \mu$ and $R$ is a
nilpotent matrix satisfying the following relation:
\begin{equation}\label{lev1}
e^{2\pi i \mu} R = R\, e^{2 \pi i \mu}.
\end{equation}
The monodromy ${\mathcal M}_0$ of the system (\ref{irreg}) with respect to
the normalized fundamental matrix (\ref{eq:fun-sol0}) generated by a simple
closed loop around the origin is
$$
{M}_0=\exp(2\pi i\mu)\exp(2\pi i{R}).
$$
The {\it central
connection matrix}\/  $C$ between $0$ and $\infty$ is defined by
$$
Y_0(z) =Y_{L}(z)C, ~~z\in \Pi_{L}.
$$
The {\it monodromy data}\/ of the system (\ref{irreg}) consist of $(\mu,R,C,S)$ and are related by
\begin{equation}\label{eq:mon-data}
C^{-1}S^{-T}S C=exp(2\pi i\mu)\exp(2\pi i{R}).
\end{equation}

\subsection{Dual Fuchsian system and its monodromy data}\label{se:duality}
Following \cite{Dub8}, we consider a $n\times n$  Fuchsian system of the form
\begin{equation}
\ddl \Phi = \sum_{k=1}^n \frac{A_k}{\lambda-u_k}\Phi,
\label{eq:fuchs}
\end{equation}
where
\begin{equation}\label{eq:A-V}
A_k=E_k(\nu-\frac{1}{2}- V),
\end{equation}
and $\nu$ is an arbitrary parameter.
This system is dubbed {\it dual}\/  to the system (\ref{irreg}).
Let us remind how the monodromy data of this system (\ref{eq:fuchs}) are related to the monodromy data of system  (\ref{irreg}):

\begin{theorem}\label{th:q-duality}\cite{Dub8}
Let $q=e^{2\pi i \nu}$ and assume that $q$ is not a root of the characteristic equation
\begin{equation}\label{eq:det-condq}
\det\left(q S+S^T
\right)=0
\end{equation}
Then there exist $n$ linearly independent solutions
$\phi^{(1)},\dots,\phi^{(n)}$ of the system (\ref{eq:fuchs})
analytic in $\lambda\in\mathbb C\setminus \cup_j L_j$ such that the
monodromy transformations $M_1,\dots,M_n$ along the small loops
encircling counter--clockwise the points $u_1,\dots,u_n$ are given
by
\begin{equation}\label{eq:mon}
M_k=\ID-E_k (q S+S^T).
\end{equation}
The monodromy around infinity is given by $M_\infty=-\frac{1}{q} S^{-1} S^{T}$.
\end{theorem}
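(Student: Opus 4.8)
The plan is to construct the solutions $\phi^{(1)},\dots,\phi^{(n)}$ explicitly via the duality map connecting the Fuchsian system (\ref{eq:fuchs}) to the irregular system (\ref{irreg}), following Dubrovin's approach in \cite{Dub8}, and then to compute the local monodromy around each $u_k$ directly from the residue structure of the connection $\sum_k A_k/(\lambda-u_k)$. First I would examine the local behaviour of solutions near each singular point $\lambda=u_k$. Since $A_k=E_k(\nu-\tfrac12-V)$ and $E_k$ is the rank--one diagonal projector onto the $k$-th coordinate, the residue matrix $A_k$ has rank one; its nonzero eigenvalue is the $(k,k)$ entry of $\nu-\tfrac12-V$, which equals $\nu-\tfrac12$ because $V$ is skew-symmetric and hence has zero diagonal. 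Thus each $A_k$ has eigenvalues $\{\nu-\tfrac12,0,\dots,0\}$, and the local monodromy exponent at $u_k$ is $e^{2\pi i(\nu-\tfrac12)}=-q$ on the one-dimensional nontrivial eigenspace. The claimed formula $M_k=\ID-E_k(qS+S^T)$ should be checked to be consistent with this: $E_k(qS+S^T)$ is rank one, so $M_k$ differs from the identity by a rank-one matrix, and its single nontrivial eigenvalue should come out to $1-(qS+S^T)_{kk}=1-(q+1)=-q$, since $S$ has $1$ on the diagonal. This rank-one/eigenvalue bookkeeping is the skeleton of the argument.

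Next I would make the link between the $\phi^{(j)}$ and the Stokes data $S$ precise. The natural route is to represent the Fuchsian solutions through a Laplace-type (or Borel-type) transform of the fundamental solutions $Y_L,Y_R$ of the irregular system, so that the Stokes matrices $S_\pm=S,S^T$ governing the jumps of $Y_{L,R}$ across the rays $L_j$ are converted into the local monodromy data of the dual system. Concretely, one writes each $\phi^{(j)}(\lambda)$ as a contour integral against $Y(z)e^{-\lambda z}$ with the contour running along the line $l$, and tracks how deforming $\lambda$ around $u_k$ forces the contour to cross the ray $L_k$, picking up exactly the Stokes jump there. The analyticity of $\phi^{(j)}$ on $\mathbb C\setminus\bigcup_j L_j$ is built into this construction because the integration contour avoids the rays. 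The parameter $q=e^{2\pi i\nu}$ enters through the $z^\nu$-type weighting in the transform, which is precisely why the combination $qS+S^T$ (rather than $S+S^T$) appears, with $q$ weighting the $S_+=S$ jump relative to the $S_-=S^T$ jump.

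For the monodromy at infinity, I would use the standard constraint that the product of all local monodromies equals the inverse of the monodromy at infinity (the loop around $\infty$ being homotopic to the composition of the small loops around the $u_k$, read in the correct order). So I would compute $M_\infty^{-1}=M_n\cdots M_1=\prod_k(\ID-E_k(qS+S^T))$ and show this telescopes to $-qS^{-T}S$, equivalently $M_\infty=-\tfrac1q S^{-1}S^{T}$. This telescoping exploits the triangular structure: because $E_k(qS+S^T)E_j=0$ whenever the ordering is right and $E_kME_k$ reproduces the diagonal, the ordered product collapses, using $qS+S^T$ split into its upper and lower triangular parts $qS=q(\ID+S_{>})$ and $S^T=\ID+S_{<}^T$. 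The nondegeneracy hypothesis $\det(qS+S^T)\neq0$ guarantees $M_\infty$ is well defined and that the $n$ solutions $\phi^{(j)}$ are genuinely independent.

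The hard part will be the analytic construction in the second paragraph: rigorously defining the Laplace-type transform, justifying the interchange of the $\lambda$-loop with the contour crossing, and verifying that the jump picked up across $L_k$ is exactly $E_k(qS+S^T)$ with the correct sign and $q$-weighting rather than some conjugate or transpose of it. The algebraic telescoping of the product of local monodromies is comparatively routine once the local data (\ref{eq:mon}) are established; the genuine content and the place where sign and ordering errors are easiest to make is matching the Stokes jumps of the irregular problem to the monodromy jumps of the Fuchsian problem under the duality transform. Since this is precisely the theorem of \cite{Dub8} being quoted, I would ultimately lean on that reference for the delicate transform estimates, and present the residue and telescoping computations as the verification that the stated formulas are internally consistent.
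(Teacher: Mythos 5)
The paper offers no proof of this theorem at all---it is quoted verbatim from Dubrovin's almost-duality paper \cite{Dub8}---and your sketch reconstructs precisely the approach of that reference (Laplace-type transform linking (\ref{irreg}) to (\ref{eq:fuchs}), rank-one residue analysis at each $u_k$, and the Coxeter-type telescoping of the local monodromies), so there is nothing in-paper to contrast it with. Your verifiable computations are all correct: $A_k$ has single nontrivial eigenvalue $\nu-\tfrac12$, consistent with the eigenvalue $1-(qS+S^T)_{kk}=-q$ of $M_k=\ID-E_k(qS+S^T)$, and the ordered product $M_n\cdots M_1$ does collapse to $-q\,S^{-T}S=M_\infty^{-1}$ (one can check this directly for $n=2$); your deferral to \cite{Dub8} for the delicate contour-jump matching mirrors exactly what the paper itself does.
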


\subsection{Monodromy preserving deformations}

The monodromy preserving deformations equations for the system (\ref{irreg}) are the following non--linear differential equations
\begin{equation}
\frac{\partial V}{\partial u_i}=[V_i,V],\qquad
V_i={\rm ad}_{E_i}{\rm ad}_{U}^{-1}(V),\quad i=1,\dots,n,
\label{15}
\end{equation}
where $E_i$ is the matrix with entries
$E_{i_{kl}}=\delta_{ik}\delta_{il}$. For any solutions $V(u)$ of
equation (\ref{15}), the monodromy data $(\mu,R,C,S)$ of the system
$$
\ddz Y=\left(U+\frac{V(u)}{z}
\right)Y
$$
are constant in a disk in $X_n$. These same equations describe the
isomonodromic deformations of (\ref{eq:fuchs}), namely  the
monodromy data $M_1,\dots,M_n$ of the system
\begin{equation}
\ddl \Phi = \sum_{k=1}^n \frac{A_k(u)}{\lambda-u_k}\Phi,
\qquad A_k(u)=E_k(\nu-\frac{1}{2}- V(u)), \quad k=1,\dots,n,
\label{eq:fuchs-iso}
\end{equation}
are constant in a disk in $X_n$. Indeed equations (\ref{15}) are
equivalent to the Schlesinger equations \cite{Sch} for $A_1,\dots,A_n$. In
\cite{Har} it was proved that the spectral curve of these two
systems is the same.

The set of equations (\ref{15}) can be written as a $n$--times
Hamiltonian system on the space of skew--symmetric matrices $V$
equipped with the standard linear Poisson bracket for
$\mathfrak{so}(n)\ni V$:
\begin{equation}\label{poissonV}
\left\{ V_{ab},V_{cd}\right\} = V_{ad}\delta_{bc} + V_{bc}\delta_{ad} - V_{bd}\delta_{ac} + V_{ac}\delta_{bd}.
\end{equation}
Indeed equation (\ref{15}) can be rewritten as
\begin{equation}\label{hamform}
\frac{\partial V}{\partial u_i}=\left\{V,H_i\right\},
\end{equation}
where the Hamiltonian functions $H_i$ depend on the times $u_1,\dots,u_n$
\begin{equation}
H_i=\frac{1}{2}\sum_{j\neq i} \frac{V_{ij}^2}{u_i-u_j}.
\label{hamV}\end{equation}

Equivalently the isomonodromic deformations equations for  $A_1,\dots,A_n$ can be written as
\begin{equation}\label{hamformfuchs}
\frac{\partial A_k}{\partial u_i}=\left\{A_k,H_i\right\},
\end{equation}
where $\{\cdot,\cdot\}$ are the standard linear Poisson bracket for
$\mathfrak{gl}(n)\ni A_k$ and the Hamiltonian functions $H_i$ are
given by:
\begin{equation}
H_i=\frac{1}{2}\sum_{j\neq i} \frac{\Tr(A_i A_j)}{u_i-u_j},
\label{hamA}\end{equation}
and coincide with the previous ones thanks to the fact that
$$
\Tr(A_i A_j)=V_{ij}^2.
$$

\subsection{Korotkin--Samtleben bracket}

In this subsection we recall the definition of the
Korotkin--Samtleben bracket and obtain the Dubrovin--Ugaglia
bracket as its reduction when (\ref{eq:det-condq}) is satisfied.

According to \cite{KS} the standard Lie--Poisson bracket on $\mathfrak{gl}(n,{\mathbb C})$ is mapped by the monodromy map to
\begin{eqnarray}\nn
&&
\left\{ M_i\otim_,M_i\right\}
=\two{M_i}\Omega\one{M_i} - \one{M_i} \Omega \two{M_i} \\
&&
\left\{ M_i\otim_,M_j\right\}
=  \one{M_i} \Omega \two{M_j} +\two{M_j}\Omega\one{M_i}-\Omega\one{M_i}\two{M_j}-\Omega\two{M_j}\one{M_i} ,\quad\hbox{for}\,\,i<j.
\end{eqnarray}
This bracket does not satisfy the Jacobi identity, but it
reduces to a Poisson bracket on the adjoint invariant objects,
i.e., on the traces of the matrices $M_1,\dots, M_n$ and their
products.

If $q$ is chosen in such a way that condition (\ref{eq:det-condq})
is satisfied,  the monodromy matrices of the dual Fuchsian system
have the form (\ref{eq:mon}) so that
\begin{equation}\label{eq:S-M1}
\Tr(M_i M_j)=n-2-2q+q S_{ij}^2,\qquad i<j,
\end{equation}
where $S_{ij}$ is the $ij$ entry in the Stokes matrix $S$.
As a consequence the entries of  the Stokes matrix $S$ are adjoint
invariant, and the Korotkin--Samtleben bracket reduces to a Poisson bracket on them. This was precisely the main idea by
Ugaglia, she assumed $q=1$ and proved that for
$$
\det(S+S^T)\neq 0,
$$
the restriction of the  Korotkin--Samtleben bracket to the entries of
the Stokes matrix leads to a closed Poisson algebra given by the
formulae (\ref{poissonu}).

The Casimirs of this Poisson bracket are the eigenvalues of the
matrix $S^{-T}S$ so that the generic Poisson leaves $\mathcal
L_{generic}$ have dimension
$$
\dim(\mathcal L_{generic})=\frac{n(n-1)}{2}-\left[\frac{n}{2}\right].
$$

\begin{remark}\label{rm:nu}
Note that actually it is not necessary to choose $q=1$. In fact,
given any $q$ such that condition (\ref{eq:det-condq}) is satisfied,
it is always true that
\bea
\left\{s_{ik},s_{jl}\right\}&=&\frac{1}{q^2 s_{ik}s_{jl}}\{\Tr(M_i M_k),\Tr(M_j M_l)\}=\nn\\
&=&
\frac{i\pi\left(\epsilon(l-k)+\epsilon(k-j)+\epsilon(i-l)+\epsilon(j-i)\right)}{q^2 s_{ik}s_{jl}}
\tr\left([M_k, M_i][M_j,M_l]\right),\nn
\eea
where $\epsilon(k)$ is the sign of $k$.
By brute force computation, using the specific form of the
matrices $M_k$, one obtains always the same Poisson bracket
(\ref{poissonu}). This observation is  quite important when we want
to study the case when the rank of the matrix $S+S^T$ is very low.
In this case we can pick $q\neq 1$ and prove that the Poisson
algebra is (\ref{poissonu}) anyway. We will discuss this case
further in Section \ref{se:bondal}.
\end{remark}

\section{Poisson algebras of geodesic length functions}\label{se:geo}

In this Section we discuss the Dubrovin--Ugaglia bracket in the
context of Teichm\"uller theory. We first recall some key facts which
will be needed below.

Due to E. Verlinde and H. Verlinde \cite{VV} the
configuration space of Einstein gravity in $2+1$ dimensions is a  Riemann surface with boundary
components (or holes) and orbifold points times an interval representing
the time variable. The algebra of observables is identified with the collection of geodesic length
functions of geodesic representatives of homotopy classes of
closed curves together with its natural mapping class
group action.

The Poisson structure on geodesic length functions is provided
by the Goldman brackets \cite{Gold} and coincides with
the Poisson brackets that follow from the Chern--Simons theory \cite{FR}.

The Poisson algebra of geodesic functions is always closed
(and linear) on the subset of geodesic functions corresponding to
{\it multi-curves,}\/ which are sets of curves without intersections and
self-intersections. However, these sets are always
infinite whereas the Teichm\"uller spaces $\mathcal T_{g,s,n}$ are spaces of
(real) dimension $6g-6+2s+2n$, where $g$ is the genus of the Riemann
surface, $s$ is the number of boundary components (or holes) and $n$ is the number of orbifold points.

Multi-curve geodesic functions are therefore algebraically dependent,
and one encounters the problem of constructing an algebraically
independent (or, at least, finite) basis of observables such that
the Poisson brackets become closed on this set. In the general case this problem
is still open.

In the special case of Riemann surfaces with one or two holes
\cite{ChF2},~\cite{ChP}, and in the case of a Riemann sphere with one hole and
$n$ orbifold points of order $2$ \cite{Ch1a}, the Poisson algebra
generated by the Goldman bracket on  geodesic length functions
closes and  coincides  with the Dubrovin--Ugaglia bracket.

Here we recall the basics of this construction, which is based on the graph description of the Teichm\"uller space. Denote by
$\Sigma_{g,s,n}$ a Riemann surface of genus $g$ with $s$ holes and
$n$ orbifold points of order two.
We assume the hyperbolicity condition $2g-2+s>0$, so that by the Poincar\'e uniformization
theorem, we have
$$
\Sigma_{g,s,n}\sim {\mathbb H}^+_2\slash \Delta_{g,s,n},
$$
where $ {\mathbb H}^+_2$ is the upper half plane and $\Delta_{g,s,n}$ is a Fuchsian group,  the fundamental group of the surface~$\Sigma_{g,s,n}$:
$$
 \Delta_{g,s,n}=\langle\gamma_1\dots,\gamma_{2g+s+n-1}\rangle,\qquad
 \gamma_1\dots,\gamma_{2g+s+n-1}\in PSL(2,{\mathbb R}).
$$
In particular, for orbifold Riemann
surfaces, the Fuchsian group $\Delta_{g,s,n}$ is such that all its elements are either hyperbolic or have
trace equal to zero.

We recall the Thurston shear-coordinate description \cite{Penn1},~\cite{Fock1}  of the
Teichm\"uller spaces of Riemann surfaces with holes and, possibly, orbifold points (see \cite{Ch1a}).
The main idea is to decompose each hyperbolic matrix
$\gamma\in \Delta_{g,s,n}$ as a product of the form
\begin{equation}\label{eq:decomp}
\gamma= (-1)^K R^{k_{i_p}} X_{Z_{i_p}} \dots R^{k_{i_1}} X_{Z_{i_1}},\qquad i_j\in I,\quad
k_{i_j}=1,2,\quad K:=\sum_{j=1}^p k_{i_j}
\end{equation}
where $I$ is a set of integer indices and the matrices $R,\, L$ and $X_{Z_i}$ are defined as follows:
\begin{eqnarray}\nn\label{eq:generators}
&&
R:=\left(\begin{array}{cc}1&1\\-1&0\\
\end{array}\right), \qquad
L=-R^2:=\left(\begin{array}{cc}0&1\\-1&-1\\
\end{array}\right), \\
&&
X_{Z_i}:=\left(\begin{array}{cc}0&-\exp\left({\frac{Z_i}{2}}\right)\\
\exp\left(-{\frac{Z_i}{2}}\right)&0\\
\nn\end{array}\right),
\end{eqnarray}
and to decompose each traceless element as
\begin{equation}
\label{eq:decomp1}
\gamma_0=\gamma^{-1} F \gamma,
\end{equation}
where $\gamma$ is decomposed as in (\ref{eq:decomp}) and
$$
F=\left(\begin{array}{cc}0&1\\-1&0\\
\end{array}\right).
$$
The main point of this construction is that one can obtain the decompositions (\ref{eq:decomp}) and
(\ref{eq:decomp1}) by looking at closed loops on the fat--graph. The fat--graph, or spine, $\Gamma_{g,s,n}$ is a connected graph that
can be drawn without self-intersections on $\Sigma_{g,s,n}$ that has all
vertices of valence three except exactly $n$ one-valent vertices situated at the
orbifold points, has a prescribed cyclic ordering of
labeled edges entering each vertex, and it is a maximal graph in the
sense that its complement on the Riemann surface
is a set of disjoint polygons (faces), each polygon
containing exactly one hole (and becoming simply connected after gluing
this hole). Since a graph must have at least one face, only Riemann
surfaces with holes, $s>0$, can be described in this way.
These fat graphs (or spines)
constructed originally in~\cite{Fock1}~\cite{Fock2} in the case of surfaces
without orbifold points are dual to ideal triangle decompositions of Penner \cite{Penn1}.

We obtain the decomposition of an element of the Fuchsian group
$\Delta_{g,s,n}$ using the one-to-one correspondence between closed paths in the
fat graph (spine) $\Gamma_{g,s,n}$ and conjugacy classes of the Fuchsian group $\Delta_{g,s,n}$.
The decomposition (\ref{eq:decomp}) can
be obtained by establishing a one-to-one correspondence between
elements of the Fuchsian group itself and closed paths in the spine
starting and terminating at the same directed edge. Each  time the
path $A$ corresponding to the element $\gamma_A$ (or, equivalently,
to its invariant closed geodesic)  passes through the
$\alpha$th edge, an edge-matrix  $X_{Z_\alpha}$ with the real
coordinate $Z_\alpha$ (related to the length of that edge)
 appears in the decomposition of $\gamma$. At
the end of the edge, the path can either turn right or left, and a
matrix $R$ or $L$ respectively appears in the decomposition
\cite{Fock1}. To obtain decomposition (\ref{eq:decomp1}), we observe that
when a path reaches a one-valent vertex (a
pending vertex), it undergoes an {\em
inversion} \cite{Ch1}, which corresponds to inserting the matrix $F$
into the corresponding string of $2\times2$-matrices.
The edge terminating at a pending vertex is called a {\em pending}
edge.

The  algebras of geodesic length functions were constructed in \cite{Ch1}
by  postulating the Poisson relations on the level of the shear
coordinates $Z_\alpha$ of the Teichm\"uller space:
\begin{equation}
\label{eq:Poisson}
\bigl\{f({\mathbf Z}),g({\mathbf Z})\bigr\}=\sum_{{\hbox{\small 3-valent} \atop \hbox{\small vertices $\alpha=1$} }}^{4g+2s+n-4}
\,\sum_{i=1}^{3\!\!\mod 3}
\left(\frac{\partial f}{\partial Z_{\alpha_i}} \frac{\partial g}{\partial Z_{\alpha_{i+1}}}
- \frac{\partial g}{\partial Z_{\alpha_i}} \frac{\partial f}{\partial Z_{\alpha_{i+1}}}\right),
\end{equation}
where the sum ranges all the {\em three-valent} vertices of a graph and
$\alpha_i$ are the labels of the cyclically (counterclockwise)
ordered ($\alpha_{i+3}\equiv \alpha_i $) edges incident to the vertex
with the label $\alpha$. This bracket gives rise to the {\it Goldman
bracket} on the space of geodesic length functions \cite{Gold}.

In terms of geodesic length functions the bracket (\ref{eq:Poisson}) corresponds to
\begin{equation}\label{eq:poisson1}
\left\{ \Tr \gamma_A, \Tr  \gamma_B\right\}=\frac{1}{2}\Tr ( \gamma_A  \gamma_B ) -\frac{1}{2}  \Tr ( \gamma_A  \gamma_B^{-1}).
\end{equation}
So we see that every time we consider the bracket between the
geodesics lengths of two loops $A$ and $B$, we produce the geodesics
lengths of two new loops $A\,B$ and $A\,B^{-1}$. To close the Poisson algebra one must use the
{\em skein relation} valid for two arbitrary matrices in $PSL(2)$:
\begin{equation}\label{eq:skein}
 \Tr \gamma_A \Tr  \gamma_B=\Tr ( \gamma_A  \gamma_B ) + \Tr ( \gamma_A  \gamma_B^{-1}) .
\end{equation}
We can use this relation for resolving the crossing between the two geodesics $A$ and $B$ as in Fig. \ref{fi:skein-cl}.

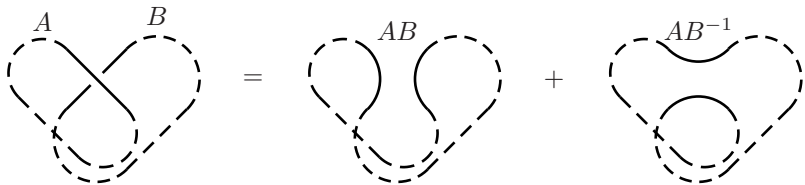
\begin{figure}[h]
\hspace*{2cm}
\vskip .2in
{\psset{unit=0.4}
\begin{pspicture}(-12,-3)(12,2)
\newcommand{\LOOPS}{%
\psarc[linestyle=dashed, linewidth=1pt](2,0){1.42}{-45}{135}
\pcline[linestyle=dashed, linewidth=1pt](3,-1)(1,-3)
\psarc[linestyle=dashed, linewidth=1pt](0,-2){1.42}{135}{315}
\psarc[linestyle=dashed, linewidth=1pt](0.2,-1.8){1.13}{-135}{45}
\pcline[linestyle=dashed, linewidth=1pt](-0.6,-2.6)(-2.6,-0.6)
\psarc[linestyle=dashed, linewidth=1pt](-1.8,0.2){1.13}{45}{225}
}
\rput(-10,0){\LOOPS}
\rput(-10,0){
\pcline[linewidth=1pt](-1,1)(1,-1)
\pcline[linewidth=1pt](-1,-1)(-0.2,-0.2)
\pcline[linewidth=1pt](1,1)(0.2,0.2)
\rput(-1.8,1.6){\makebox(0,0)[cb]{$A$}}
\rput(2,1.8){\makebox(0,0)[cb]{$B$}}
}
\rput(-4.8,0){\makebox(0,0)[cc]{$=$}}
\rput(0,0){\LOOPS}
\rput(0,0){
\psarc[linewidth=1pt](-2,0){1.42}{-45}{45}
\psarc[linewidth=1pt](2,0){1.42}{135}{225}
\rput(0,1.3){\makebox(0,0)[cb]{$AB$}}
}
\rput(5.2,0){\makebox(0,0)[cc]{$+$}}
\rput(10,0){\LOOPS}
\rput(10,0){
\psarc[linewidth=1pt](0,2){1.42}{225}{315}
\psarc[linewidth=1pt](0,-2){1.42}{45}{135}
\rput(0,1.3){\makebox(0,0)[cb]{$A B^{-1}$}}
}
\end{pspicture}
}
\caption{\small The classical skein relation.}
\label{fi:skein-cl}
\end{figure}

The skein relation is often not enough to close the Poisson
algebra on a finite set of generators. In this paper, we shall
consider two special cases in which we indeed can close the
algebra just by means of skein relation: the case of Riemann
surfaces of genus $g$ and one or two holes (which we dub $CFP$ due
to the fact that it was mostly developed in \cite{ChP,ChF}) in
subsection \ref{se:CP} and  the case of a Riemann sphere with one
hole and with $n\geq3$ orbifold points of order two (dubbed $\mathcal A_n$
case due to its close ties to cluster algebra theory \cite{FST})  in subsection
\ref{sub:An}.

\subsection{The $CFP$ case}\label{se:CP}

This is  the case of a Riemann surface of genus $g$ with one or two holes, the  fat-graph
on which graph-simple geodesics constitute a convenient algebraic basis is shown in Fig.~\ref{octopus}.
The genus $g=\left[\frac{n-1}{2}\right]$, where $n$ is the number of vertical edges, and the number
$s$ of holes is
$$
s=\left\{\begin{array}{lc}
1&\hbox{for $n$ odd,}\\
2&\hbox{for $n$ even.}\\
\end{array}\right.
$$
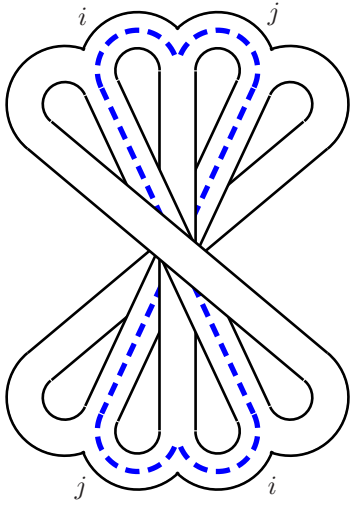
\begin{figure}
{\psset{unit=0.6}
\begin{pspicture}(-5.5,-5.5)(5.5,5.5)
\newcommand{\LINE}{%
\psframe[linecolor=white, fillstyle=solid, fillcolor=white](-.4,4)(.4,-4)
\pcline[linewidth=1pt](-.4,4)(-.4,-4)
\pcline[linewidth=1pt](.4,4)(.4,-4)
}
\newcommand{\LINELINE}{%
\psframe[linecolor=white, fillstyle=solid, fillcolor=white](-.4,4)(.4,-4)
\pcline[linewidth=1pt](-.4,4)(-.4,-4)
\pcline[linewidth=1pt](.4,4)(.4,-4)
\pcline[linewidth=2pt,linecolor=blue,linestyle=dashed](0,4)(0,-4)
}
\newcommand{\CIRCLE}{%
\psarc[linewidth=1pt](0,4.1){1.29}{-12.5}{192.5}
}
\rput{-50}(0,0){\LINE}
\rput{-25}(0,0){\LINELINE}
\rput(0,0){\LINE}
\rput{25}(0,0){\LINELINE}
\rput{50}(0,0){\LINE}
\newcommand{\PAT}{%
\rput{-37.5}(0,0){\psarc[linewidth=1pt](0,4.1){1.29}{-12.5}{192.5}}
\rput{-12.5}(0,0){\psarc[linewidth=1pt](0,4.1){1.29}{-12.5}{192.5}}
\rput{12.5}(0,0){\psarc[linewidth=1pt](0,4.1){1.29}{-12.5}{192.5}}
\rput{37.5}(0,0){\psarc[linewidth=1pt](0,4.1){1.29}{-12.5}{192.5}}
\rput{-37.5}(0,0){\pswedge[linecolor=white, fillstyle=solid, fillcolor=white](0,4.1){1.25}{-12.5}{192.5}}
\rput{-12.5}(0,0){\pswedge[linecolor=white, fillstyle=solid, fillcolor=white](0,4.1){1.25}{-12.5}{192.5}}
\rput{12.5}(0,0){\pswedge[linecolor=white, fillstyle=solid, fillcolor=white](0,4.1){1.25}{-12.5}{192.5}}
\rput{37.5}(0,0){\pswedge[linecolor=white, fillstyle=solid, fillcolor=white](0,4.1){1.25}{-12.5}{192.5}}
\rput{-37.5}(0,0){\psarc[linewidth=1pt](0,4.1){.49}{-12.5}{192.5}}
\rput{-12.5}(0,0){\psarc[linewidth=1pt](0,4.1){.49}{-12.5}{192.5}}
\rput{12.5}(0,0){\psarc[linewidth=1pt](0,4.1){.49}{-12.5}{192.5}}
\rput{37.5}(0,0){\psarc[linewidth=1pt](0,4.1){.49}{-12.5}{192.5}}
\rput{12.5}(0,0){\psarc[linewidth=2pt,linecolor=blue,linestyle=dashed](0,4.1){.89}{10}{192.5}}
\rput{-12.5}(0,0){\psarc[linewidth=2pt,linecolor=blue,linestyle=dashed](0,4.1){.89}{-12.5}{170}}
}
\rput(0,0){\PAT}
\rput{180}(0,0){\PAT}
\rput(-2,5){\makebox(0,0)[rb]{$i$}}
\rput(2,5){\makebox(0,0)[lb]{$j$}}
\rput(-2,-5){\makebox(0,0)[rt]{$j$}}
\rput(2,-5){\makebox(0,0)[lt]{$i$}}
\end{pspicture}
}
\caption{\small{The fat graph in the $CFP$ case, the blue geodesic is $G{i,j}$.}}
\label{octopus}
\end{figure}
Graph-simple closed geodesics in this picture are those and only
those that pass through exactly two different vertical edges; we can
then enumerate them by ordered pairs of edge indices denoting by
$G_{ij}$ ($i<j$) the corresponding geodesic functions. Denoting by
$Z_1,\dots,Z_n$ the coordinates on the vertical edges and by
$Y_1,\dots,Y_{2n-6}$ those on the horizontal edges, we obtain
\be
\label{eq:geoNR}
G_{ij}=    X_{Z_i} L Y_{n+i-4} \dots
R X_{Y_{n+j-5}} L  X_{Z_j} R X_{Y_{j-2}}\dots X_{Y_{i}} L X_{Y_{i-1}} R,
\ee
so, for example,
\begin{eqnarray}
&{}&
G_{12} =X_{Z_1} L X_{Z_2} R,\nn\\
&{}&
G_{13} = X_{Z_1} R X_{Y_{n-2}} L X_{Z_3} R X_{Y_{1}} L,\nn\\
&{}&
\dots\nn\\
&{}&
G_{1n}= X_{Z_1} R X_{Y_{n-2}} R X_{Y_{n-1}}\dots R X_{Y_{2n-6}} L X_{Z_n}  L X_{Y_{n-3}}  L \dots  X_{Y_{1}}  L,\nn\\
&{}&
G_{23} =   X_{Z_2} L X_{Y_{n-2}} L  X_{Z_3} R X_{Y_{1}} R.\nn
\end{eqnarray}
The Poisson algebra
for the functions $G_{ij}$ is described by
\be
\{G_{ij},G_{kl}\}=\left\{\begin{array}{l}
0,\quad j<k,\\
0,\quad k<i,\ j<l,\\
G_{ik}G_{jl}-G_{kj}G_{il},\quad i<k<j<l,\\
\frac12 G_{ij}G_{jl}-G_{il},\quad j=k,\\
G_{il}-\frac12 G_{ij}G_{il},\quad i=k,\ j<l\\
G_{ik}-\frac12 G_{ij}G_{kj},\quad j=l,\ i<k.
\end{array}
\right. \label{P-geod}
\ee
This is just a rescaled Dubrovin--Ugaglia bracket .

\subsection{The $\mathcal A_n$ case}\label{sub:An}

The simplest case of orbifold Riemann surface is a Riemann sphere
$\Sigma_{0,1,n}$ with one hole and $n\geq3$ orbifold points of
order two. In this case, the fat-graph $\Gamma_{0,1,n}$ is a
tree-like graph with $n$ pending vertices depicted in
Fig.~\ref{fi:An} for $n=3,4$. We enumerate the $n$
pending vertices counterclockwise, $i,j=1,\dots, n$, and consider
the algebra of all geodesic functions.

\begin{figure}[h]
\begin{center}
{\psset{unit=0.6}
\begin{pspicture}(-4,-3)(8,3)
\newcommand{\PAT}{%
\pcline(-0.87,0.5)(-0.87,2)
\pcline(0.87,0.5)(0.87,2)
\pscircle*(0,2){0.1}
}
\newcommand{\GEOD}{%
\psarc[linewidth=1.5pt,linestyle=dashed](0,2){0.6}{0}{180}
\psarc[linewidth=1.5pt,linestyle=dashed](1.73,-1){0.4}{-120}{60}
\psbezier[linewidth=1.5pt,linestyle=dashed](0.6,2)(0.4,.2)(.4,.2)(1.93,-0.65)
\psbezier[linewidth=1.5pt,linestyle=dashed](-0.6,2)(-0.6,0.5)(.53,-.8)(1.53,-1.35)
}
\rput(-6,0){\makebox(0,0){$\mathbf{n{=}3}$}}
\rput(-3,0){\PAT}
\rput{120}(-3,0){\PAT}
\rput{240}(-3,0){\PAT}
\rput(-3,0){
\psarc[linewidth=1.5pt,linestyle=dashed,linecolor=red](0,2){0.6}{0}{180}
\psarc[linewidth=1.5pt,linestyle=dashed,linecolor=red](1.73,-1){0.4}{-120}{60}
\psbezier[linewidth=1.5pt,linestyle=dashed,linecolor=red](0.6,2)(0.4,.2)(.4,.2)(1.93,-0.65)
\psbezier[linewidth=1.5pt,linestyle=dashed,linecolor=red](-0.6,2)(-0.6,0.5)(.53,-.8)(1.53,-1.35)
}
\rput{120}(-3,0){
\psarc[linewidth=1.5pt,linestyle=dashed,linecolor=blue](0,2){0.6}{0}{180}
\psarc[linewidth=1.5pt,linestyle=dashed,linecolor=blue](1.73,-1){0.4}{-120}{60}
\psbezier[linewidth=1.5pt,linestyle=dashed,linecolor=blue](0.6,2)(0.4,.2)(.4,.2)(1.93,-0.65)
\psbezier[linewidth=1.5pt,linestyle=dashed,linecolor=blue](-0.6,2)(-0.6,0.5)(.53,-.8)(1.53,-1.35)
}
\rput{240}(-3,0){
\psarc[linewidth=1.5pt,linestyle=dashed,linecolor=green](0,2){0.6}{0}{180}
\psarc[linewidth=1.5pt,linestyle=dashed,linecolor=green](1.73,-1){0.4}{-120}{60}
\psbezier[linewidth=1.5pt,linestyle=dashed,linecolor=green](0.6,2)(0.4,.2)(.4,.2)(1.93,-0.65)
\psbezier[linewidth=1.5pt,linestyle=dashed,linecolor=green](-0.6,2)(-0.6,0.5)(.53,-.8)(1.53,-1.35)
}
\rput(-3,1.6){\makebox(0,0){\tiny$\mathbf1$}}
\rput(-4.4,-.8){\makebox(0,0){\tiny$\mathbf2$}}
\rput(-1.6,-.8){\makebox(0,0){\tiny$\mathbf3$}}
\pcline[linewidth=0.5pt]{->}(-4.9,-2.1)(-4.7,-1.6)
\rput(-5,-2.5){\makebox(0,0){\small$G_{1,2}$}}
\pcline[linewidth=0.5pt]{->}(-0.6,-2)(-0.8,-1.4)
\rput(-.5,-2.5){\makebox(0,0){\small$G_{2,3}$}}
\pcline[linewidth=0.5pt]{->}(-2,2.5)(-2.5,2.3)
\rput(-1.5,2.6){\makebox(0,0){\small$G_{1,3}$}}
\rput(2,0){\makebox(0,0){$\mathbf{n{=}4}$}}
\rput{-90}(5,0){\PAT}
\rput{-210}(5,0){\PAT}
\rput{-330}(5,0){\PAT}
\rput{90}(9,0){\PAT}
\rput{210}(9,0){\PAT}
\rput{330}(9,0){\PAT}
\pscircle*[linecolor=white, fillstyle=solid, fillcolor=white](7,0){0.3}
\rput{30}(5,0){
\psarc[linewidth=1.5pt,linestyle=dashed,linecolor=red](0,2){0.6}{0}{180}
\psbezier[linewidth=1.5pt,linestyle=dashed,linecolor=red](0.6,2)(0.4,.2)(.4,.2)(1.93,-0.65)
\psbezier[linewidth=1.5pt,linestyle=dashed,linecolor=red](-0.6,2)(-0.6,0.5)(.53,-.8)(1.53,-1.35)
}
\rput{-150}(9,0){
\psarc[linewidth=1.5pt,linestyle=dashed,linecolor=red](0,2){0.6}{0}{180}
\psbezier[linewidth=1.5pt,linestyle=dashed,linecolor=red](0.6,2)(0.4,.2)(.4,.2)(1.93,-0.65)
\psbezier[linewidth=1.5pt,linestyle=dashed,linecolor=red](-0.6,2)(-0.6,0.5)(.53,-.8)(1.53,-1.35)
}
\rput{-30}(9,0){
\psarc[linewidth=1.5pt,linestyle=dashed,linecolor=blue](0,2){0.6}{0}{180}
\psbezier[linewidth=1.5pt,linestyle=dashed,linecolor=blue](-0.6,2)(-0.4,.2)(-.4,.2)(-1.93,-0.65)
\psbezier[linewidth=1.5pt,linestyle=dashed,linecolor=blue](0.6,2)(0.6,0.5)(-.53,-.8)(-1.53,-1.35)
}
\rput{-210}(5,0){
\psarc[linewidth=1.5pt,linestyle=dashed,linecolor=blue](0,2){0.6}{0}{180}
\psbezier[linewidth=1.5pt,linestyle=dashed,linecolor=blue](-0.6,2)(-0.4,.2)(-.4,.2)(-1.93,-0.65)
\psbezier[linewidth=1.5pt,linestyle=dashed,linecolor=blue](0.6,2)(0.6,0.5)(-.53,-.8)(-1.53,-1.35)
}
\rput(4.2,1.3){\makebox(0,0){\tiny$\mathbf1$}}
\rput(4.2,-1.3){\makebox(0,0){\tiny$\mathbf2$}}
\rput(9.8,-1.3){\makebox(0,0){\tiny$\mathbf3$}}
\rput(9.8,1.3){\makebox(0,0){\tiny$\mathbf4$}}
\rput(2.5,-1.8){\makebox(0,0){\small$G_{2,4}$}}
\rput(11.5,-1.8){\makebox(0,0){\small$G_{1,3}$}}
\end{pspicture}
}
\caption{\small Generating graphs for $\mathcal A_n$ algebras for $n=3,4$. We indicate character
geodesics whose geodesic functions $G_{ij}$ enter bases of the corresponding algebras.}
\label{fi:An}\end{center}
\end{figure}
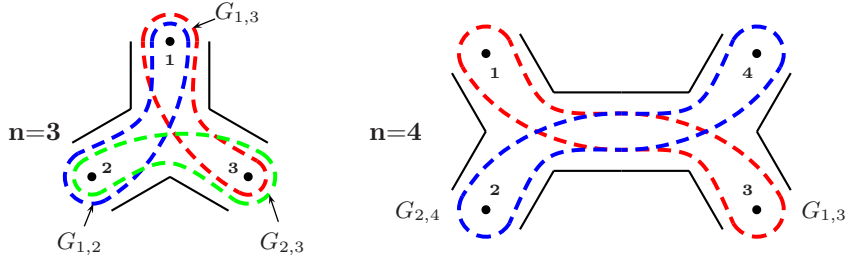

We consider a basis $\gamma_1,\dots,\gamma_n$ in the Fuchsian group $\Delta_{0,1,n}$ such that
\be\label{eq:geoAn}
-\Tr(\gamma_i\gamma_j)=G_{i,j}.
\ee
(The sign convention is such that when we interpret $G_{i,j}$ as being the geodesic functions
related to lengths $\ell_{i,j}$ of closed geodesics, we have $G_{i,j}=2\cosh(\ell_{i,j}/2)\ge2$.)  In this case, for convenience
we let $Z_i$ denote the coordinates of pending edges and $Y_j$ all
other coordinates. This basis in the Fuchsian group $\Delta_{0,1,n}$ is given by the following:
\begin{eqnarray}
&{}&
\gamma_1=F,\nn\\
&{}&
\gamma_2 =- X_{Z_1} L X_{Z_2} F X_{Z_2} R X_{Z_1}\nn\\
&&
\gamma_3= -X_{Z_1} R X_{Y_1} L X_{Z_3} F X_{Z_3} R X_{Y_1} L X_{Z_1}   \nn\\
&{}&
\dots\nn \\
&{}&
\gamma_i =- X_{Z_1} R X_{Y_1} R X_{Y_2}  \dots R X_{Y_{i-2}} L X_{Z_i} F X_{Z_i}  R X_{Y_{i-2}} L  \dots
X_{Y_1} L X_{Z_1},\label{eq:basisn}\\
&{}&
\dots \nn \\
&{}&
\gamma_{n-1} =- X_{Z_1} R X_{Y_1} R X_{Y_2}  \dots R X_{Y_{n-3}} L X_{Z_{n-1}} F X_{Z_{n-1}}  R X_{Y_{n-3}} L  \dots
X_{Y_1} L X_{Z_1},\nn \\
&{}&
\gamma_n =-  X_{Z_1} R X_{Y_1} R X_{Y_2}  \dots R X_{Y_{n-3}} R X_{Z_n} F X_{Z_n}  R X_{Y_{n-3}} L  \dots X_{Y_1} L X_{Z_1},\nn
\end{eqnarray}
Observe that $\Tr\gamma_i=0$, $i=1,\dots,n$. It is not hard to check that the matrix
$$
\gamma_\infty:=(\gamma_1 \gamma_{2} \dots \gamma_n)^{-1}
$$
has eigenvalues $(-1)^{n-1} e^{\pm P/2}$, where $P$ is the length of the perimeter around the hole:
\begin{equation}\label{eq:per}
P = 2\sum_{i=1}^n Z_i+2\sum_{j=1}^{n-3} Y_j.
\end{equation}

Let $G_{i,j}=-\Tr(\gamma_i \gamma_j)$ with $i<j$
denote the geodesic function corresponding to the geodesic line that
encircles exactly two pending vertices with the indices $i$ and
$j$. Examples for $n=3$ and $n=4$ are in figure \ref{fi:An}.
It turns out that these geodesic functions suffice for closing the
Poisson algebra:
\begin{eqnarray}\label{eq:NR}
&&
\left\{G_{i,k},G_{j,l}\right\}=0,\quad\hbox{for}\,\,i<k<j<l,\quad\hbox{and for}\,\, i<j<l<k, \nn\\
&&
\left\{G_{i,k},G_{j,l}\right\}=2\left(G_{i,j}G_{k,l}-G_{i,l}G_{k,j}\right),\quad\hbox{for}\, \, i<j<k<l,\nn\\
&&
\left\{G_{i,k},G_{k,l}\right\}=G_{i,k}G_{k,l}-2G_{i,l},\quad\hbox{for}\,\,  i<k<l,\\
&&
\left\{G_{i,k},G_{j,k}\right\}=-\left(G_{i,k}G_{j,k}-2G_{i,j}\right),\quad\hbox{for}\, \, i< j<k,\nn\\
&&
\left\{G_{i,k},G_{i,l}\right\}=-\left(G_{i,k}G_{i,l}-2G_{k,l}\right),\quad\hbox{for}\, \, i<k<l.\nn
\end{eqnarray}
Note that this is again a simple rescaling of the Dubrovin--Ugaglia bracket.

\begin{remark}\label{rk:g}
The formulae for $G_{ij}$ in terms of the shear
coordinates $Z_1,\dots,Z_n$, $Y_1,\dots,Y_{n-3}$ in the $\mathcal A_n$ case
coincide with a specialization of the formulae of the geodesics
$G_{ij}$ given by (\ref{eq:geoNR}) in which we assume $Y_{n-3+i}=Y_i$
for $ i=1,\dots n-3$ and we take  the double lengths
$2{Z_1},\dots,2Z_n$ In other words:
$$
G^{(\mathcal A_n)}_{ij}(Z_1,\dots,Y_{n-3})=G^{CFP}_{ij}(2 Z_1,\dots 2Z_n,Y_1,\dots,Y_{n-3},Y_1,\dots,Y_{n-3}).
$$
\end{remark}

\section{Symplectic leaves corresponding to the Teichm\"uller space}\label{se:bondal}

As mentioned in the introduction, since the Poisson bracket for
geodesic length functions both  in the $CFP$ case (Riemann surface
$\Sigma_{g,s,0}$ of genus $g$ with $s=1,2$ holes) and in the $\mathcal A_n$
case (Riemann sphere $\Sigma_{0,1,n}$ with one hole and $n$ orbifold
points of order two) coincides with the Dubrovin--Ugalgia bracket,
it makes sense to characterize the symplectic leaves to which the
two Teichm\"uller spaces $\mathcal T_{g,s,0}$, $s=1,2$, and
$\mathcal T_{0,1,n}$ belong.

In particular we recall that
$$
\dim_{\mathbb R}\left(\mathcal T_{g,s,0}
\right)= \left\{\begin{array}{ll}
3n-7 &\hbox{for $n$ odd}\\
3n-8 &\hbox{for $n$ even,}\\
\end{array}\right.\,\hbox{where}\,\, g=\left[\frac{n-1}{2}\right] , \,
s=\left\{\begin{array}{ll}
1&\hbox{for $n$ odd,}\\
2&\hbox{for $n$ even,}\\
\end{array}\right.
$$
and
$$
\dim_{\mathbb R}\left(\mathcal T_{0,1,n}\right)= 2(n-2),
$$
while the generic symplectic leaves $\mathcal L_{\small{generic}}$ in the Dubrovin--Ugaglia bracket have dimension
$$
\dim_{\mathbb C}\left(\mathcal L_{{generic}}\right)=\frac{n(n-1)}{2}-\left[\frac{n}{2}\right].
$$
It is natural to ask whether the Teichm\"uller spaces arise as real
slices of some  sub-varieties of a generic leaf or of a degenerated
leaf. In this section we prove that in both cases the
Teichm\"uller spaces correspond to degenerated symplectic leaves
complex  dimension equal to the real dimension of the Teichm\"uller
space itself:

\begin{theorem}\label{th:main0}
Denote by $\mathcal L_{\mathcal A_n}$ and by $\mathcal L_{CFP}$  the
symplectic leaves to which the Stokes matrices of with entries
$s_{ij}=G_{ij}$ where $G_{ij}$ are given respectively
 by (\ref{eq:geoAn},\ref{eq:basisn}) or by (\ref{eq:geoNR}) belong. Then
$$
\dim_{\mathbb C}\left(\mathcal L_{CFP}
\right)= \left\{\begin{array}{lc}
3n-7 &\hbox{ for $n$ odd}\\
3n-8 &\hbox{ for $n$ even}\\
\end{array}\right.
$$
and
$$
\dim_{\mathbb C}\left(\mathcal L_{\mathcal A_n}\right)= 2(n-2),
$$
\end{theorem}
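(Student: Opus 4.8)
The plan is to transport the Dubrovin--Ugaglia structure to the shear coordinates and thereby reduce the computation of $\dim_{\mathbb C}\mathcal L$ to the rank of a constant antisymmetric matrix. Recall that in both cases the assignment $s_{ij}=G_{ij}$ defines a Poisson map $\Phi$ from the space of shear coordinates $(Z_\bullet,Y_\bullet)$, carrying the constant bracket (\ref{eq:Poisson}), to the space of Stokes matrices carrying the Dubrovin--Ugaglia bracket (\ref{poissonu}); this is exactly the content, recalled in Section \ref{se:geo}, that the brackets (\ref{P-geod}) and (\ref{eq:NR}) are rescalings of (\ref{poissonu}). Since the dimension of a symplectic leaf equals the rank of the Poisson bivector at any of its points and $\Phi$ is Poisson, the dimension of the leaf $\mathcal L$ through $S_0=\Phi(p_0)$ equals $\operatorname{rank}\bigl(J\,\omega\,J^{T}\bigr)$, where $\omega$ is the constant antisymmetric matrix of (\ref{eq:Poisson}) and $J$ is the Jacobian of $\Phi$ at $p_0$.

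First I would establish that $\Phi$ is an immersion at a generic point $p_0$, i.e. that $J$ has full column rank equal to the number of edges of the spine. Granting this, an elementary argument (injectivity of $J$ forces $\ker(J\,\omega\,J^{T})=\{v:J^{T}v\in\ker\omega\}$, while $J^{T}$ is onto) yields $\operatorname{rank}(J\,\omega\,J^{T})=\operatorname{rank}(\omega)$, so that $\dim_{\mathbb C}\mathcal L=\operatorname{rank}(\omega)=(\#\,\text{edges})-\dim\ker\omega$.

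It then remains to identify $\ker\omega$, which is the space of linear Casimirs of the shear bracket (\ref{eq:Poisson}). The structural input, built into the Thurston shear-coordinate description recalled above, is that $\ker\omega$ is spanned precisely by the perimeter functions, one per face (hole) of the spine; in the $\mathcal A_n$ case the unique such Casimir is the perimeter $P$ of (\ref{eq:per}). I would check directly from (\ref{eq:Poisson}) that each perimeter Poisson-commutes with every shear coordinate---the multiplicities with which the edges enter a face boundary are exactly those for which the cyclic contributions at the trivalent vertices telescope to zero---and then argue, from the incidence structure of the two specific spines, that there are no further independent linear Casimirs, so that $\dim\ker\omega$ equals the number of holes $s$. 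Feeding in the edge counts ($\#\,\text{edges}=2n-3$ for the tree-like $\mathcal A_n$ spine, with $n$ pending edges $Z_i$ and $n-3$ internal edges $Y_j$, and $\#\,\text{edges}=3n-6$ for the $CFP$ spine of Fig.~\ref{octopus}, with $n$ vertical and $2n-6$ horizontal edges) together with $s=1$ for $\mathcal A_n$, $s=1$ for $CFP$ with $n$ odd, and $s=2$ for $CFP$ with $n$ even, gives $\dim_{\mathbb C}\mathcal L_{\mathcal A_n}=2(n-2)$ and $\dim_{\mathbb C}\mathcal L_{CFP}=3n-7$ for $n$ odd, $3n-8$ for $n$ even.

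I expect the main obstacle to be the two independence statements: the immersion property of $\Phi$, equivalently the functional independence of sufficiently many geodesic functions $G_{ij}$ as coordinates on the perimeter-extended Teichm\"uller space, and the matching upper bound $\dim\ker\omega\le s$, where the precise combinatorics of the two spines is essential. A useful consistency check comes from the dual Fuchsian picture: at $S_0$ the eigenvalues of $S^{-T}S$ (the generic Dubrovin--Ugaglia Casimirs) must degenerate so as to be governed only by the hole perimeters, reflecting that the $n\times n$ monodromy at infinity reduces to the $2\times2$ boundary monodromy $\gamma_\infty$ with eigenvalues $(-1)^{n-1}e^{\pm P/2}$; this is what forces the leaf to be genuinely degenerate once $2(n-2)$, respectively $3n-7,3n-8$, drops below $\tfrac{n(n-1)}2-[\tfrac n2]$.
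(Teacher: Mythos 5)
Your strategy is sound in outline and genuinely different from the one in the paper. The paper never works with the Jacobian of the map $(Z_\bullet,Y_\bullet)\mapsto S$ at all: it computes $\dim_{\mathbb C}\mathcal L$ intrinsically on the Stokes side, via Bondal's formula (\ref{eq:bondal}), which requires only the Jordan normal form of $S^{-T}S$ at the points in question. That Jordan form is obtained in Theorems \ref{thm:JordanAn} and \ref{thm:JordanCFP} by combining a rank bound on $S+S^T$ (Lemma \ref{lem-rank}, proved by expanding the $\gamma_i$ in real Pauli matrices, so that $G_{ij}$ becomes a Minkowski scalar product of $n$ vectors in $2{+}1$ or $3{+}1$ dimensions) with the modular invariance of the eigenvalues of $S^{-T}S$ (Lemma \ref{lm:mod}), which licenses a drastic specialization of the shear coordinates ($Y_j=-i\pi$, alternating $Z_i$) under which $\det(\lambda S+\lambda^{-1}S^T)$ can be evaluated in closed form. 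Your pushforward computation, if completed, would be arguably more conceptual: it explains the answer as (number of edges) minus (number of holes), i.e. as the rank of the constant bracket (\ref{eq:Poisson}), whereas the paper's route additionally delivers the finer information of Theorems \ref{thm:JordanAn}--\ref{thm:JordanCFP} (the explicit Jordan blocks and the identification of the Casimir values with hole perimeters), which is used later in the paper.

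However, as written your argument has a genuine gap, and you have correctly located it yourself: the entire lower bound rests on the immersion property of $\Phi$, and you do not prove it. Note the asymmetry. The \emph{upper} bound $\dim_{\mathbb C}\mathcal L=\operatorname{rank}(J\omega J^T)\le\operatorname{rank}(\omega)\le(\#\,\mathrm{edges})-s$ is essentially free: one only needs the easy direct check that each perimeter is a Casimir of (\ref{eq:Poisson}) and that the $s$ perimeter differentials are independent. But the \emph{lower} bound needs both (i) $J$ of full column rank, i.e. functional independence of the $G_{ij}$ of (\ref{eq:geoNR}) resp. (\ref{eq:basisn}) as functions of all $2n-3$, resp. $3n-6$, shear coordinates, and (ii) $\dim\ker\omega\le s$. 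Item (ii) is a finite combinatorial linear-algebra check on the two specific spines (and a known result of Fock for trivalent graphs), so it is routine, if tedious in the pending-vertex case where only three-valent vertices contribute to (\ref{eq:Poisson}). Item (i) is the substantive missing content: it amounts to showing that $2n-3$ (resp. $3n-6$) of the trace functions $G_{ij}$ give local coordinates on the perimeter-extended Teichm\"uller space, which is true but is precisely the kind of statement that requires an argument (e.g. inverting a distinguished subset such as $\{G_{i,i+1}\}\cup\{G_{i,i+2}\}$ for the shear coordinates, or invoking the reconstruction results of the cited Chekhov--Fock--Penner literature). The paper's proof deliberately sidesteps this: Bondal's formula needs only the value of the Jordan form at the given points, not any nondegeneracy of the parametrization. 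Until you supply (i), your computation establishes only $\dim_{\mathbb C}\mathcal L\le 2(n-2)$, resp. $\le 3n-7$ or $3n-8$, not the equality claimed in Theorem \ref{th:main0}. A smaller point: since rank can jump on closed subsets, your generic-point argument should be supplemented by a remark (semicontinuity plus constancy of rank along a leaf, or immersivity at every point with $P\neq 0$) to cover all the leaves in the family, matching the paper's own genericity caveats.
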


\proof In order to compute the dimension of the symplectic leaf to which a
particular Stokes matrix belongs we use a
formula by Bondal \cite{Bondal} which is based on the block diagonal
form of the Jordan normal form $J_0$ of $S^{-T}S$:

\begin{lm}
Given an arbitrary upper triangular matrix $S$ with $1$ on the diagonal, the Jordan normal form $J_0$ of $S^{-T}S$ decomposes as follows
\begin{equation}\label{eq:j-dec}
J_0=\sum_{\lambda\neq(-1)^{k+1}} n_{\lambda,k} \left(J_{\lambda,k}
\oplus J_{\frac{1}{\lambda}, k} \right) +
\sum_{\lambda=(-1)^{k+1}} m_{\lambda,k} J_{\lambda,k},
\end{equation}
where $J_{\lambda,k}$ denotes the $k\times k$ Jordan block with eigenvalue $\lambda$, i.e.
$$
J_{\lambda,k}=\left(\begin{array}{ccccc}
\lambda&1&0&\dots&0\\
0&\lambda&1&\dots&0\\
\dots&\dots&\dots&\dots&\dots\\
0&\dots&0&\lambda&1\\
0&\dots&0&0&\lambda\\
\end{array}\right),
$$
and $n_{\lambda,k}$ and $m_{(-1)^{k+1},k}$ are the multiplicities of the blocks
$J_{\lambda,k} \oplus J_{\frac{1}{\lambda}, k}$ and $J_{(-1)^{k+1},k}$ respectively.

The dimension of the symplectic leaf $\mathcal L_S$ to which $S$ belongs is
$$
\dim_{\mathbb C}\left(\mathcal L_S\right)=\frac{n(n-1)}{2}-{\rm d}(S),
$$
where
\begin{eqnarray}\label{eq:bondal}\nn
&&
{\rm d}(S)= \sum_{\lambda\neq\pm1} \min(k,l) n_{\lambda,k}n_{\lambda,l}+
2  \sum_{\lambda=\pm1} \min(k,l) n_{\lambda,k}n_{\lambda,l}+\nn \\
&&
\quad\qquad+2 \sum_{\lambda=\pm1} \min(k,l) n_{\lambda,k}m_{\lambda,l}+
 \frac{1}{2} \sum_{\lambda=\pm1} \min(k,l) m_{\lambda,k}m_{\lambda,l}-\\
 &&
\quad\qquad - \frac{1}{2} \sum_{\lambda=1} m_{1,l}+
 \sum_{\lambda=\pm1} k\, n_{\lambda,k}\nn
 \end{eqnarray}
\end{lm}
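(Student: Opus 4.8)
The plan is to prove the two assertions in turn: first the block-structure (\ref{eq:j-dec}) of the Jordan form of $M:=S^{-T}S$, and then the codimension formula (\ref{eq:bondal}). The basic object throughout is $M$, which because $\det S=1$ (upper triangular, unit diagonal) has a reciprocal characteristic polynomial. Indeed $\det(M-\lambda\ID)=\det(S^{-T})\det(S-\lambda S^T)=\det(S-\lambda S^T)$, and transposing the last matrix together with the identity $\det(S^T-\lambda S)=(-\lambda)^n\det(S-\tfrac1\lambda S^T)$ yields $\chi_M(\lambda)=(-\lambda)^n\chi_M(1/\lambda)$. Thus the spectrum of $M$ is invariant under $\lambda\mapsto 1/\lambda$, so eigenvalues different from $\pm1$ occur in reciprocal pairs.

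To upgrade this spectral symmetry to the block level I would invoke the viewpoint that $M$ is the \emph{cosquare} of the nondegenerate bilinear form with Gram matrix $S$. The congruence class of $S$ is a complete invariant of that form, and the classical canonical-form theory for cosquares of bilinear forms states exactly that the Jordan blocks occur in reciprocal pairs $J_{\lambda,k}\oplus J_{1/\lambda,k}$ of \emph{equal} size when $\lambda\neq\pm1$, while a self-reciprocal block $\lambda=\pm1$ can occur unpaired only when its size realises the parity $\lambda=(-1)^{k+1}$ (odd blocks at $+1$, even blocks at $-1$). This is precisely the content of (\ref{eq:j-dec}), with $n_{\lambda,k}$ enumerating reciprocal pairs and $m_{(-1)^{k+1},k}$ the admissible self-reciprocal blocks; the reciprocity of $\chi_M$ above is the semisimple shadow of this finer statement.

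For the dimension formula I would use that the Casimirs of (\ref{poissonu}) are the conjugacy invariants of $M=S^{-T}S$ (the eigenvalues of $S^{-T}S$ were already identified as Casimirs in the excerpt). A direct computation shows $M^TSM=S^TS^{-T}S=S$, so $M$ lies in the isometry group $G_S=\{P:P^TSP=S\}$ of the form $S$. I would then identify the corank of the Poisson bivector at $S$, i.e. the number $\mathrm{d}(S)$ in $\dim_{\mathbb C}\mathcal L_S=\tfrac{n(n-1)}{2}-\mathrm{d}(S)$, with $\dim Z_{G_S}(M)$, the centralizer of $M$ inside $G_S$, and compute the latter blockwise against (\ref{eq:j-dec}). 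A reciprocal pair $\lambda\neq\pm1$ contributes the ordinary $GL$-centralizer count $\sum_{k,l}\min(k,l)\,n_{\lambda,k}n_{\lambda,l}$, the effective halving here being the summation over one representative per pair; the self-reciprocal blocks at $\pm1$ instead live in an orthogonal/symplectic centralizer and contribute the $\tfrac12\min(k,l)m_{\lambda,k}m_{\lambda,l}$ terms, the cross terms $2\min(k,l)n_{\lambda,k}m_{\lambda,l}$, the doubled pure-$\pm1$ paired terms, the residual $\sum_{\lambda=\pm1}k\,n_{\lambda,k}$, and the correction $-\tfrac12\sum_{\lambda=1}m_{1,l}$ coming from the odd blocks in the orthogonal factor at $+1$. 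Assembling these gives exactly (\ref{eq:bondal}), and the generic case ($[n/2]$ simple reciprocal pairs, plus a single $m_{1,1}$ when $n$ is odd) reduces to $\mathrm{d}(S)=[n/2]$, matching $\dim_{\mathbb C}\mathcal L_{generic}=\tfrac{n(n-1)}{2}-[n/2]$.

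The main obstacle is the last step: pinning down that the corank of the bracket (\ref{poissonu}) at $S$ equals $\dim Z_{G_S}(M)$, and then executing the classical-group centralizer bookkeeping so that all coefficients $1$, $2$, $\tfrac12$ and the correction term emerge correctly. The delicate point is the orthogonal-versus-symplectic dichotomy of the self-reciprocal blocks, which is exactly what the parity constraint $\lambda=(-1)^{k+1}$ from the structural step controls and what forces the asymmetric treatment of $\lambda=+1$ and $\lambda=-1$ in (\ref{eq:bondal}). By contrast, the reciprocal-pair contribution and the eigenvalue pairing itself are routine once the cosquare description of $M$ is in place.
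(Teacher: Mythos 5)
Your proposal is correct and is in substance the same proof as the paper's: the paper establishes this lemma simply by citing Section 5.5 and formula (5.10) of Bondal \cite{Bondal} (with two corrections to (5.10)), and your sketch --- reciprocity of the characteristic polynomial, the canonical-form theory of cosquares $S^{-T}S$ of nondegenerate bilinear forms giving the pairing $J_{\lambda,k}\oplus J_{\frac{1}{\lambda},k}$ with unpaired blocks admissible only at $\lambda=(-1)^{k+1}$, and the identification of the leaf codimension ${\rm d}(S)$ with the dimension of the stabilizer of $S$ under congruence, computed blockwise --- is precisely the content of Bondal's theory that the paper invokes, including the two steps you flag as the main obstacles, which the paper likewise leaves entirely to \cite{Bondal}. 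One harmless redundancy: $Z_{G_S}(M)=G_S$, since $P^TSP=S$ already forces $P^{-1}MP=M$ for $M=S^{-T}S$, so your centralizer is the full isometry group, whose Lie-algebra dimension is what Bondal's formula (5.10) tabulates.
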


\begin{proof}
The proof of the first statement is a trivial consequence of Section
5.5 in Bondal's paper. The formula (\ref{eq:bondal}) is  (5.10) in
\cite{Bondal} (with two small corrections: a factor $2$ in the first
term of the second row and the last term in the last row were
missing).
\end{proof}

In order to use this result to compute the dimension of our
symplectic leaves we need to describe the Jordan normal form $J_0$
of $S^{-T}S$ for a Stokes matrix $S$ with entries $s_{ij}=G_{ij}$
where $G_{ij}$ are given either by (\ref{eq:geoNR}) or by
(\ref{eq:geoAn},\ref{eq:basisn}). This is achieved in the next two
theorems which will be proved in subsections \ref{suse:proof1} and
\ref{suse:proof2} respectively.

\begin{theorem}\label{thm:JordanAn}
Let $S$ be un upper triangular matrix with $1$ on the diagonal and off diagonal entries
$$
S_{ij}= -\Tr(\gamma_i \gamma_j),\qquad i<j,
$$
with $\gamma_1,\dots,\gamma_n$ given in terms of shear coordinates
by formula (\ref{eq:basisn}). Then  for $n$ even, the matrix
$S^{-T}S$, has the following Jordan form:
\begin{equation}\label{eq:J-Veven}
J_0=\left(\begin{array}{ccc}
\begin{array}{cc}
-e^{P}&0\\
0&-e^{-P}\\
\end{array}& {\mathbb O}&{\mathbb O}\\
{\mathbb O} &
\begin{array}{cc}
-1&1\\
0&-1\\ \end{array}&
{\mathbb O}\\
{\mathbb O}&{\mathbb O}&-{\mathbb I_{n-4}}\\
\end{array}
\right),
\end{equation}
while for $n$ odd,
\begin{equation}\label{eq:J-Vodd}
J_0=\left(\begin{array}{cc}
\begin{array}{ccc}
e^{P}&0&0\\
0&e^{-P}&0\\
0&0&1\\ \end{array}&
{\mathbb O}\\
{\mathbb O}&-{\mathbb I_{n-3}}\\
\end{array}
\right),
\end{equation}
where $P=\sum_{i=1}^n Z_i+\sum_{j=1}^{n-3} Y_j$ is the central element corresponding to the face of the fat-graph.
\end{theorem}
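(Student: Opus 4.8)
The plan is to reduce everything to the single identity
\[
S^{-T}S+\ID=S^{-T}(S+S^{T}),
\]
so that $\ker(S^{-T}S+\ID)=\ker B$ with $B:=S+S^{T}$, and, more generally, the whole Jordan structure at the eigenvalue $-1$ is controlled by the degenerate symmetric matrix $B$. The first step is to compute $\operatorname{rank}B$. Since each $\gamma_{i}$ is traceless we have $\gamma_i^2=-\ID$, hence $B_{ii}=2$ and $B_{ij}=-\Tr(\gamma_i\gamma_j)$, i.e. $B=-2\,\mathcal G$ where $\mathcal G_{ij}=\tfrac12\Tr(\gamma_i\gamma_j)=\langle\gamma_i,\gamma_j\rangle$ is the Gram matrix of $\gamma_1,\dots,\gamma_n$ regarded as vectors in the three-dimensional space $\mathfrak{sl}_2$ of traceless $2\times2$ matrices with the nondegenerate form $\langle A,C\rangle=\tfrac12\Tr(AC)$. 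This is precisely the Minkowski picture of subsection~\ref{ss:Min}. For a genuine (non-elementary) hyperbolic structure the $\gamma_i$ span $\mathfrak{sl}_2$, so $\operatorname{rank}B=3$ and $\dim\ker B=n-3$; moreover $\ker B=\{v:\sum_i v_i\gamma_i=0\}$ by nondegeneracy of the form. This already yields an $(n-3)$-dimensional eigenspace of $S^{-T}S$ for the eigenvalue $-1$.

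Next I would pin down the three remaining eigenvalues from the characteristic polynomial $\det(\lambda\ID-S^{-T}S)=\det(\lambda S^{T}-S)$ (using $\det S=1$). Transposing inside the determinant gives the inversion symmetry $\det(\lambda S^{T}-S)=(-\lambda)^{n}\det(\lambda^{-1}S^{T}-S)$, so the spectrum is stable under $\lambda\mapsto\lambda^{-1}$. The rank-$3$ collapse of the pencil at $\lambda=-1$ forces the factor $(\lambda+1)^{n-3}$, leaving an inversion-symmetric cubic cofactor $c(\lambda)=(\lambda-\mu)(\lambda-\mu^{-1})(\lambda-\epsilon)$ with $\epsilon=\epsilon^{-1}=\pm1$. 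The determinant condition $\det(S^{-T}S)=1$ then forces $\epsilon=(-1)^{n-1}$, and a single trace computation fixes $\mu+\mu^{-1}$: matching $\Tr(S^{-T}S)=-(n-3)+\mu+\mu^{-1}+\epsilon$ against the value $\Tr\gamma_\infty=(-1)^{n-1}(e^{P}+e^{-P})$ implied by the stated eigenvalues of $\gamma_\infty=(\gamma_1\cdots\gamma_n)^{-1}$ gives $\mu=(-1)^{n-1}e^{P}$, with $P$ as in the statement. Thus the nonradical eigenvalues are exactly $(-1)^{n-1}e^{\pm P}$ and $(-1)^{n-1}$.

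It remains to read off the Jordan structure at $-1$, and here the parity enters solely through whether the third eigenvalue $\epsilon=(-1)^{n-1}$ collides with the radical eigenvalue $-1$. For $n$ odd, $\epsilon=+1\neq-1$: the eigenvalues $e^{\pm P},1$ are simple and distinct from $-1$, the $-1$-eigenspace is exactly $\ker B$, and one checks it is semisimple via $\ker(S^{-T}B)=\ker((S^{-T}B)^{2})$, giving (\ref{eq:J-Vodd}). For $n$ even, $\epsilon=-1$ coincides with the radical, so $-1$ has algebraic multiplicity $(n-3)+1=n-2$ but geometric multiplicity $n-3$; to obtain (\ref{eq:J-Veven}) I must show this produces exactly one $2\times2$ Jordan block together with $n-4$ trivial ones, equivalently $\operatorname{rank}(S^{-T}B)-\operatorname{rank}((S^{-T}B)^{2})=1$.

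The main obstacle is this last, even-$n$ statement: the bare counting of multiplicities is automatic, but showing that the collision yields a single size-$2$ block—and neither a larger one nor two independent defects—requires controlling the rank of the square of the pencil at $\lambda=-1$, i.e. proving that the unique generalized eigenvector lies in $\operatorname{im}B$ with multiplicity exactly one. I would attack this by describing the null vectors of $B$ explicitly in the shear coordinates of (\ref{eq:basisn}), or by induction on $n$ exploiting the recursive form of the generators, which also provides the cleanest rigorous route to the trace identity for $\Tr\gamma_\infty$ used above. The $CFP$ computation should then follow along the same lines, in view of the specialization in Remark~\ref{rk:g}.
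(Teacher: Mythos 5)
Your skeleton overlaps substantially with the paper's: your rank computation for $B=S+S^{T}$ via the Gram matrix of the traceless $\gamma_i$ in three-dimensional Minkowski space is literally the paper's Lemma \ref{lem-rank} (and you are in fact more careful, noting that non-elementarity gives rank exactly $3$, which the paper uses but only states as ``at most $3$''); the identity $S^{-T}S+\ID=S^{-T}(S+S^{T})$, the inversion symmetry of the spectrum, and the determinant argument pinning $\epsilon=(-1)^{n-1}$ are all correct. However, there is a genuine gap at the decisive step: the identification $\mu=(-1)^{n-1}e^{P}$. You obtain it by ``matching'' $\Tr(S^{-T}S)=-(n-3)+\mu+\mu^{-1}+\epsilon$ against $\Tr\gamma_\infty$, i.e.\ by asserting the identity $\mu+\mu^{-1}=\Tr\gamma_\infty$. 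But that identity is equivalent to the eigenvalue statement of the theorem itself; nothing in your argument derives it, and your closing remark that an induction on $n$ ``provides the cleanest rigorous route to the trace identity'' concedes the point. Note also that the identity is delicate: it lives at the spin ($SL_2$) level, not the adjoint level --- the product of the reflections $\mathrm{Ad}_{\gamma_i}$ on $\mathfrak{sl}_2$ has eigenvalues $e^{\pm 2P}$, so the natural reflection-group heuristic produces the wrong exponent, and no soft structural argument of the kind you invoke will close this. The paper fills exactly this hole with two ingredients you have no analogue of: Lemma \ref{lm:mod}, which shows the eigenvalues are Casimirs of the Goldman bracket, hence modular invariant, hence (by the Laurent-polynomial structure in $e^{Z_\alpha/2}$) functions of $e^{P/2}$ alone; and an explicit complex specialization ($Y_j=-i\pi$, $Z_2=-Z_3=\cdots=(-1)^{n-1}Z_{n-1}$, with $Z_1,Z_n$ free) under which $\det(\lambda S+\lambda^{-1}S^{T})$ reduces to the computable form (\ref{abc-matrix}), with the Markov-triple identity $abc-a^2-b^2-c^2=(e^{P}-e^{-P})^2$ handling odd $n$.

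Conversely, what you flag as ``the main obstacle'' --- ruling out a size-$3$ block or two independent defects at $-1$ for even $n$ --- is a non-obstacle, and you already have everything needed to dispose of it. Since $\mathrm{rank}\,B=3$ exactly, the geometric multiplicity of $-1$ is exactly $n-3$, i.e.\ there are exactly $n-3$ Jordan blocks at $-1$; with algebraic multiplicity $n-2$, the only partition of $n-2$ into $n-3$ parts is $2+1+\cdots+1$, forcing a single $2\times2$ block with no computation of $\mathrm{rank}((S^{-T}B)^2)$ whatsoever. This pure counting is precisely how the paper concludes (and likewise your proposed semisimplicity check $\ker(S^{-T}B)=\ker((S^{-T}B)^2)$ for odd $n$ is redundant, since there algebraic and geometric multiplicities already coincide). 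So the proposal has the difficulty inverted: the Jordan combinatorics you worry about is automatic, while the eigenvalue computation you treat as a one-line matching is the actual content of the theorem and remains unproved.
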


\begin{theorem}\label{thm:JordanCFP}
Let $S$ be an upper triangular matrix with $1$ on the diagonal and off diagonal entries
$$
S_{ij}= -\Tr(\gamma_i \gamma_j),\qquad i<j,
$$
with $\gamma_1,\dots,\gamma_n$ given in terms of shear coordinates
by formula (\ref{eq:basisnn}). Then  the matrix $S^{-T}S$, has the
following Jordan form  for $n$ even:
\begin{equation}\label{eq:J-CFP}
J_0=\left(\begin{array}{cc}
\begin{array}{cccc}
-e^{P_1}&0&0&0\\
0&-e^{-P_1}&0&0\\
0&0&-e^{P_2}&0\\
0&0&0&-e^{-P_2}\\
\end{array}& {\mathbb O}\\
{\mathbb O}&-{\mathbb I_{n-4}}\\
\end{array}
\right),
\end{equation}
and for $n$ odd:
\begin{equation}\label{eq:CFP1}
J_0=\left(\begin{array}{cc}
\begin{array}{ccccc}
e^{P}&0&0&0&0\\
0&e^{-P}&0&0&0\\
0&0&-1&1&0\\
0&0&0&-1&0\\
0&0&0&0&1\\
\end{array}& {\mathbb O}\\
{\mathbb O}&-{\mathbb I_{n-5}}\\
\end{array}
\right),
\end{equation}
where ${\mathbb I_{n-4}}$ and ${\mathbb I_{n-5}}$ are respectively
the $(n-4)\times(n-4)$ and   $(n-5)\times(n-5)$ identity matrices
and  $P_1=\sum_{i=1}^n Z_i$ ,$P_2=\sum_{j=1}^{2n-6}Y_j$
are the perimeters of the $2$ holes in the case of
$n$ even, and $P=\sum_{i=1}^n Z_i+\sum_{j=1}^{2n-6}Y_j$
is the perimeter of the one hole for $n$ odd.
\end{theorem}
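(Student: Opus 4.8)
The plan is to compute the spectrum of $S^{-T}S$ in three stages --- the self-reciprocal eigenvalues $+1$ and $-1$, and the ``hyperbolic'' eigenvalues carrying the hole perimeters --- and then to reassemble the Jordan blocks. The organising observation is that $S^{-T}S$ is conjugate to its inverse: since $(S^{-T}S)^{T}=S^{T}S^{-1}$ is conjugate by $S$ to $S^{-1}S^{T}=(S^{-T}S)^{-1}$, and a matrix shares the spectrum of its transpose, the eigenvalues occur in reciprocal pairs $\{\lambda,\lambda^{-1}\}$, with $\pm1$ self-paired; as $\det S=1$ the characteristic polynomial is palindromic. Hence it suffices to locate the few eigenvalues $\neq\pm1$ and to control the defect at $\pm1$, after which the decomposition (\ref{eq:j-dec}) is forced.

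I would first handle $\pm1$ through the kernels of $S\pm S^{T}$. Indeed $S^{-T}Sv=-v$ is equivalent to $(S+S^{T})v=0$ and $S^{-T}Sv=v$ to $(S-S^{T})v=0$, so the geometric multiplicities of $-1$ and $+1$ equal $\dim\ker(S+S^{T})$ and $\dim\ker(S-S^{T})$. Because $S-S^{T}$ is skew-symmetric it is nonsingular for $n$ even, giving no $+1$ eigenvalue, and has a one-dimensional kernel for $n$ odd, giving exactly one $+1$ eigenvalue; the maximality of the rank off this forced kernel I would read from the explicit entries. For $-1$ I analyse the relations $2v_i+\sum_{j\neq i}G_{ij}v_j=0$, i.e.\ $\Tr(\gamma_i W)=t_i^{2}v_i$ with $W=\sum_j v_j\gamma_j$ and $t_i=\Tr\gamma_i$. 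In the traceless $\mathcal A_n$ situation this forces $W$ to be trace-orthogonal to the span of the generators, so $\mathrm{rank}(S+S^{T})=3$; in the $CFP$ situation the hyperbolic generators have $t_i\neq0$ and contribute the extra identity direction, which I expect to raise the rank to $4$, giving geometric multiplicity $n-4$ for $-1$. Establishing $\mathrm{rank}(S+S^{T})=4$ from the explicit generators is one of the technical points.

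Next I would identify the eigenvalues $\neq\pm1$ with the boundary holonomies. Taking $q=1$ as in Ugaglia's reduction, Theorem \ref{th:q-duality} gives $M_\infty=-S^{-1}S^{T}$, hence $S^{-T}S=-M_\infty^{-1}$, and by (\ref{eq:mon-data}) the spectrum of $S^{-T}S$ is $\{e^{2\pi i\mu_j}\}$ with Jordan type governed by the nilpotent $R$. The key claim is that the eigenvalues different from $\pm1$ coincide with those of the $2\times2$ boundary holonomies, of the form $(-1)^{n-1}e^{\pm P}$, with the relevant lengths expressed through the shear coordinates. For the $CFP$ graph the complement of the spine (Fig.~\ref{octopus}) has one or two faces: for $n$ odd there is a single hole of perimeter $P=\sum_iZ_i+\sum_jY_j$, producing the pair $e^{\pm P}$, whereas for $n$ even the boundary splits into two loops of perimeters $P_1=\sum_iZ_i$ and $P_2=\sum_jY_j$, producing the four simple eigenvalues $-e^{\pm P_1},-e^{\pm P_2}$. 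Concretely I would obtain these either by exhibiting eigenvectors built from the boundary-parallel edge paths, or by dividing the palindromic characteristic polynomial by $(\lambda\pm1)$ to the multiplicities found above and factoring the remaining quadratic (respectively biquadratic) factor.

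Finally the Jordan block sizes at $-1$ follow by comparing algebraic and geometric multiplicities. The algebraic multiplicity of $-1$ equals $n$ minus the multiplicities of $+1$ and of the perimeter eigenvalues, i.e.\ $n-4$ for $n$ even and $n-3$ for $n$ odd; subtracting the geometric multiplicity $n-4$ leaves defect $0$ for $n$ even, so all blocks are $1\times1$ and the $-1$ part is $-{\mathbb I}_{n-4}$, and defect $1$ for $n$ odd, which forces a single resonant block $J_{-1,2}$ together with $-{\mathbb I}_{n-5}$ and the lone $J_{1,1}$. The main obstacle is exactly the middle step: proving rigorously that $S^{-T}S$ has no spurious eigenvalues away from $\pm1$ beyond the $2\times2$ boundary holonomies, and that these are precisely $-e^{\pm P_i}$ (respectively $e^{\pm P}$), since the abstract reciprocity and the kernel counts alone pin down only the $\pm1$ part. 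This is where the explicit representation (\ref{eq:geoNR}) and the fat-graph geometry must be used, in close parallel with the proof of Theorem \ref{thm:JordanAn}; I also expect the parity-dependent resonance producing $J_{-1,2}$ --- equivalently, the degeneration of the second perimeter under the specialization of Remark \ref{rk:g} --- to require the most care.
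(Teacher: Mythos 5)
Your outer bookkeeping is correct and matches the paper's endgame: the reciprocal pairing of the spectrum, the equivalences $S^{-T}Sv=-v\Leftrightarrow(S+S^{T})v=0$ and $S^{-T}Sv=v\Leftrightarrow(S-S^{T})v=0$, and the algebraic-minus-geometric defect count that forces a single $J_{-1,2}$ for odd $n$ and a diagonal $-1$-part for even $n$ are all exactly how the paper concludes once it knows the characteristic roots and knows ${\rm rk}(S^{-T}S+{\mathbb I})=4$ (the latter is Lemma \ref{lem-rank}, proved by the same Minkowski/Pauli expansion you indicate). The genuine gap is the middle step, which you yourself flag as ``the main obstacle'' but never resolve: you give no workable mechanism to compute the four (resp.\ three) eigenvalues away from $-1$ and to certify there are no spurious ones. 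Of your two suggested routes, dividing the palindromic characteristic polynomial presupposes having that polynomial in general shear coordinates, which is precisely what is intractable; and ``eigenvectors built from boundary-parallel edge paths'' is unsubstantiated --- note that the eigenvalues $e^{\pm P}$ of $S^{-T}S$ are, up to sign, the \emph{squares} of the boundary-holonomy eigenvalues $(-1)^{n-1}e^{\pm P/2}$, so there is no direct holonomy-to-eigenvector dictionary. The paper's missing idea is Lemma \ref{lm:mod}: the eigenvalues of $S^{-T}S$ are Casimirs of the Goldman bracket, hence modular invariant, and $\det(\lambda^{-1}S^{T}+\lambda S)$ is a Laurent polynomial of bounded order in each $e^{Z_\alpha/2}$, hence a Laurent polynomial in $e^{P/2}$ (resp.\ in the two perimeter exponentials) alone. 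This licenses analytic continuation to \emph{complex} shear coordinates, and the specialization $e^{Y_j/2}=i$ (for all but two of the $Y$'s when $n$ is even, to keep both perimeters alive) together with alternating $Z_i$ collapses $\lambda S+\lambda^{-1}S^{T}$ to the bordered form (\ref{abc-matrix}) or its six-parameter analogue, whose determinant is evaluated in closed form, e.g.\ (\ref{even-n-NR}). Without this (or an equivalent) device, your plan does not close.

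Two further points would fail as written. First, your appeal to Theorem \ref{th:q-duality} with $q=1$ is invalid on exactly the leaves in question: the theorem requires $\det(qS+S^{T})\neq0$, but Lemma \ref{lem-rank} gives ${\rm rk}(S+S^{T})\le 4$, so $\det(S+S^{T})=0$ for all $n\ge 5$ in the $CFP$ case --- this degeneracy is the entire point of Remark \ref{rm:nu}, which handles it by choosing $q\neq1$. Second, two of your quantitative inputs are asserted rather than proved: nonsingularity of the skew-symmetric $S-S^{T}$ for even $n$ is not automatic (its Pfaffian could vanish), and Lemma \ref{lem-rank} only bounds ${\rm rk}(S+S^{T})$ \emph{above} by $4$, whereas your geometric-multiplicity count needs equality. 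In the paper both facts fall out of the same explicit determinant: for even $n$, evaluating (\ref{even-n-NR}) at $\lambda=i$ yields a nonzero multiple of $\bigl(\cosh(P_1/2)+\cosh(P_2/2)\bigr)^{2}$, excluding the eigenvalue $+1$, and the four simple perimeter roots force the rank to be exactly $4$. So while your skeleton is the right one and parallels the paper's proof of Theorem \ref{thm:JordanAn} as you intend, every substantive input it consumes is left open, and the single idea that proves them all at once --- modular invariance plus complex specialization of the fat-graph coordinates --- is absent.
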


\begin{remark}
Very similar Jordan normal forms appear for the matrix $S^{-T}S$
where $S$ is the Stokes matrix associated to the Frobenius manifold
structure on Hurwitz space (see Theorem 4 in \cite{SV}). However,
in that case the central elements $P$ or $P_1,P_2$ are rational
multiples of $2\pi i$ rather than real numbers.
\end{remark}

A first step in the direction of proving Theorems \ref{thm:JordanAn} and \ref{thm:JordanCFP}
is carried out in the next Lemma:

\begin{lm}\label{lem-rank}
The matrix of the symmetric form $G_{ij}=(S^{T}+S)_{ij}$ has at most rank four
in the case of a Riemann surface $\Sigma_{g,s,0}$ of genus $g$ with $s=1,2$ and at most rank three in the $\mathcal A_n$ case.
\end{lm}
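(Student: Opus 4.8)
The plan is to show that the symmetric matrix $G = S^T + S$, whose off-diagonal entries are the geodesic functions $G_{ij} = -\Tr(\gamma_i\gamma_j)$ and whose diagonal entries are $2$ (equivalently $G_{ii} = -\Tr(\gamma_i^2)$ after accounting for the traceless normalization $\Tr\gamma_i = 0$, which gives $\Tr(\gamma_i^2) = -2$ for matrices in $SL(2)$ with vanishing trace), factors through a low-dimensional linear representation. The key observation is that each $\gamma_i \in SL(2,\mathbb{R})$ is a $2\times 2$ matrix, so the products $\Tr(\gamma_i\gamma_j)$ are built from bilinear expressions in the four matrix entries of the $\gamma_i$.

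First I would recall the identity for $2\times 2$ matrices $A, B$ with $\det = 1$, namely $\Tr(AB) + \Tr(AB^{-1}) = \Tr A \,\Tr B$, which for traceless $A, B$ reduces to $\Tr(AB) = -\Tr(AB^{-1})$; more usefully, I would use the polarization of the determinant form. Write each $\gamma_i = \begin{pmatrix} a_i & b_i \\ c_i & -a_i \end{pmatrix}$ using $\Tr\gamma_i = 0$. Then a direct computation gives $\Tr(\gamma_i\gamma_j) = 2a_ia_j + b_ic_j + c_ib_j$, which is a symmetric bilinear form in the three-dimensional vectors $v_i = (a_i, b_i, c_i) \in \mathbb{R}^3$. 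Consequently $G_{ij} = -\Tr(\gamma_i\gamma_j) = -\langle v_i, v_j\rangle_Q$ for the fixed quadratic form $Q(a,b,c) = 2a^2 + 2bc$ on $\mathbb{R}^3$, so $G = -V^T Q V$ where $V$ is the $3 \times n$ matrix with columns $v_i$. This immediately yields $\mathrm{rank}(G) \le 3$ in the $\mathcal A_n$ case.

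For the $CFP$ case the matrices $\gamma_i$ are generic hyperbolic elements rather than traceless, so the same argument runs with the full four-dimensional parametrization: each $\gamma_i \in SL(2,\mathbb{R}) \subset M_2(\mathbb{R})$ is a point in the four-dimensional space of $2\times 2$ matrices, and $\Tr(\gamma_i\gamma_j)$ is the polarization of the (nondegenerate, signature $(2,2)$) trace form $\langle A, B\rangle = \Tr(AB)$ on $M_2(\mathbb{R})$. Thus $G = V^T \eta V$ where now $V$ is $4\times n$ with columns the coordinate vectors of the $\gamma_i$ and $\eta$ is the Gram matrix of the trace form, giving $\mathrm{rank}(G) \le 4$.

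**The main obstacle** is purely bookkeeping: one must verify that the traceless normalization genuinely drops the $CFP$ bound of $4$ to $3$ in the $\mathcal A_n$ case, i.e.\ confirm that when $\Tr\gamma_i = 0$ for all $i$ the images $\gamma_i$ lie in the three-dimensional traceless subspace $\mathfrak{sl}(2,\mathbb{R})$ of $M_2(\mathbb{R})$, on which the trace form restricts to the signature-$(2,1)$ form above. Since the excerpt establishes $\Tr\gamma_i = 0$ explicitly for the $\mathcal A_n$ generators, this is immediate, and no rank-attainment needs to be argued here—only the upper bounds. The statement is thus reduced to the elementary linear-algebra fact that a symmetric Gram matrix $V^T M V$ with $V$ of size $r \times n$ has rank at most $r$, with $r = 3$ or $r = 4$ respectively.
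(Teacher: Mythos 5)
Your strategy is exactly the paper's: the paper expands each $\gamma_i$ over the real Pauli matrices $\sigma_1,\dots,\sigma_4$ and reads off $G_{ij}=-\Tr(\gamma_i\gamma_j)$ as a Gram matrix in $3{+}1$-dimensional (CFP) or $2{+}1$-dimensional ($\mathcal A_n$) Minkowski space, which is the same factorization $G=V^{T}MV$ that you write in matrix-entry coordinates. Your $\mathcal A_n$ half is complete and on the diagonal is actually more careful than the paper: you correctly check $(S^{T}+S)_{ii}=2=-\Tr(\gamma_i^2)$, using $\gamma_i^2=-\ID$ for traceless elements of $SL(2)$.

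The CFP half, however, has a genuine gap, and it is precisely the point your own $\mathcal A_n$ bookkeeping highlights. You assert $G=V^{T}\eta V$ with $V$ the $4\times n$ coordinate matrix of the $\gamma_i$ in $M_2(\mathbb R)$; at $i=j$ this identity would read $2=\Tr(\gamma_i^2)=(\Tr\gamma_i)^2-2$, forcing $\Tr\gamma_i=\pm2$, which is false for the matrices (\ref{eq:basisnn}): a direct computation gives, up to sign, $\Tr\gamma_3=2e^{-Z_3/2}\sinh\bigl((Y_1-Y_{n-2})/2\bigr)$, nonzero away from the $\mathcal A_n$ locus $Y_1=Y_{n-2}$. (Some nonzero traces are in fact necessary: were all $\gamma_i$ traceless, your argument would give rank three, contradicting the rank-four Jordan structure of Theorem \ref{thm:JordanCFP}.) What your argument really bounds is the rank of the honest Gram matrix $\hat G_{ij}=-\Tr(\gamma_i\gamma_j)$ for all $i,j$, whereas $S^{T}+S=\hat G+{\rm diag}\bigl((\Tr\gamma_i)^2\bigr)$, and the bound on $\hat G$ does not control the rank of this diagonally corrected matrix. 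To be fair, the paper's own proof is silent on the same point (its subsequent Remark even asserts $\|{\mathbf v}^{(i)}\|^2=2$ for all $i$ in both cases, which is equivalent to tracelessness), so you have reproduced the published argument faithfully, rough edge included; but to close the CFP case one needs an extra input, e.g.\ the specialization in the proof of Theorem \ref{thm:JordanCFP}, where rows $4,\dots,n$ of the rearranged matrix $S+S^{T}$ coincide, which gives rank $\le 4$ directly at the special point. One cosmetic slip: the trace form $\Tr(AB)$ on $M_2(\mathbb R)$ has signature $(3,1)$, not $(2,2)$ (the signature-$(2,2)$ form is the polarized determinant $\Tr A\,\Tr B-\Tr(AB)$); this does not affect any rank count.
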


\proof We prove this lemma in the next subsection where an
interesting interpretation in terms of $n$ particle model in
Minkowski space is studied. \endproof

\subsection{Minkowski space model}\label{ss:Min}

In both $CFP$ and $\mathcal A_n$ cases, each element $G_{ij}$ can be
presented as $G_{ij}=-\tr \gamma_i \gamma_j$, where $\gamma_k$,
$k=1,\dots,n$, are given by (\ref{eq:basisn}) for the $\mathcal A_n$ case
and, thanks to Remark \ref{rk:g},  by the following matrices in $CFP$ case:
\begin{eqnarray}
&{}&
\gamma_1=F,\nn\\
&{}&
\gamma_2 =- X_{\frac{Z_1}{2}} L X_{Z_2} R X_{\frac{Z_1}{2}}\nn\\
&&
\gamma_3= -X_{\frac{Z_1}{2}} R X_{Y_{n-2}} L X_{Z_3} R X_{Y_1} L X_{\frac{Z_1}{2}}   \nn\\
&{}&
\dots\nn \\
&{}&
\gamma_i =- X_{\frac{Z_1}{2}} R X_{Y_{n-2}} R X_{Y_{n-1}}  \dots R X_{Y_{n+i-5}} L X_{Z_i} R X_{Y_{i-2}} L  \dots
X_{Y_1} L X_{\frac{Z_1}{2}},\label{eq:basisnn}\\
&{}&
\dots \nn \\
&{}&
\gamma_{n-1} =- X_{\frac{Z_1}{2}} R X_{Y_{n-2}} R X_{Y_2}  \dots R X_{Y_{2n-6}} L X_{Z_{n-1}}  R X_{Y_{n-3}} L  \dots
X_{Y_1} L X_{\frac{Z_1}{2}},\nn \\
&{}&
\gamma_n =-  X_{\frac{Z_1}{2}} R X_{Y_{n-2}} R X_{Y_{n-1}}  \dots R X_{Y_{2n-6}} R X_{Z_n} R X_{Y_{n-3}} L  \dots X_{Y_1} L X_{\frac{Z_1}{2}}.\nn
\end{eqnarray}
Expand  $\gamma_1,\dots,\gamma_n$ as
$$\gamma_i=\sum_{\alpha=1}^4 v_\alpha^{(i)}\sigma_\alpha,\qquad i=1,\dots,n,
$$
where $\sigma_1,\dots,\sigma_4$ are the real Pauli matrices
$$
\sigma_4=\frac{1}{\sqrt{2}}\left(
\begin{array}{cc}
1 & 0 \\
0 & 1 \\
\end{array}
\right),\quad
\sigma_3=\frac{1}{\sqrt{2}}\left(
\begin{array}{cc}
1 & 0 \\
0 & -1 \\
\end{array}
\right),
$$
$$
\sigma_1=\frac{1}{\sqrt{2}}\left(
\begin{array}{cc}
0 & 1 \\
-1 & 0 \\
\end{array}
\right), \quad
\sigma_2=\frac{1}{\sqrt{2}}\left(
\begin{array}{cc}
0 & 1 \\
1 & 0 \\
\end{array}
\right),
$$
in the $CFP$ case. In the latter case we have:
\be
G_{ij}=v_1^{(i)}v_1^{(j)}-v_2^{(i)}v_2^{(j)}-v_3^{(i)}v_3^{(j)}-v_4^{(i)}v_4^{(j)}
=\sum_{\alpha,\beta=1}^4 v_{\alpha}^{(i)}v_{\beta}^{(j)}\eta^{\alpha\beta},
\label{Min-NR}
\ee
where
\be
\eta^{\alpha\beta}=\hbox{diag\,}(+,-,-,-)
\label{Min-ten}
\ee
is the metric tensor of the Minkowski $3+1$-dimensional space--time.

In the $\mathcal A_n$ case, because each $\gamma_i$ is a conjugate of $F$, $\tr \gamma_i=0$, no fourth component occurs. We then have
\be
G_{ij}=v_1^{(i)}v_1^{(j)}-v_2^{(i)}v_2^{(j)}-v_3^{(i)}v_3^{(j)}
=\sum_{\alpha,\beta=1}^3 v_{\alpha}^{(i)}v_{\beta}^{(j)}\eta^{\alpha\beta},
\label{Min-orbi}
\ee
and $\eta^{\alpha\beta}=\hbox{diag\,}(+,-,-)$ is here the metric tensor of the Minkowski
$2+1$-dimensional space--time.

This concludes the proof of Lemma \ref{lem-rank}. \hfill{$\square$}

\begin{remark}
It is interesting to notice that in the both cases, we can therefore
associate $v^{(i)}_\alpha$ with the components of $4$- or
$3$-dimensional vector ${\mathbf v}^{(i)}$ in the corresponding
Minkowski space. Due to the fact that $\Tr\gamma_i^2=2$ we obtain
the restriction
\be
\|{\mathbf v}^{(i)}\|^2\equiv v^{\alpha\,(i)}v^{(i)}_\alpha=({\mathbf v}^{(i)},{\mathbf v}^{(i)})=2 \ \forall i,
\ee
where we  use the standard  repeated indices summation. This implies
that  all the vectors ${\mathbf v}^{(i)}$, $i=1,\dots,n$  lie in the
upper sheet of the hyperboloid of two sheets (they are time-like
vectors in the physical terminology). In this case  $G_{ij}$ is the
scalar product of the corresponding vectors,
\be
G_{ij}=({\mathbf v}^{(i)},{\mathbf v}^{(j)})
\label{Min-G}
\ee
and since the difference of two different time-like vectors lying on the same sheet is a space-like vector
with negative norm, $\|{\mathbf v}^{(i)}-{\mathbf v}^{(j)}\|^2=4-2G_{ij}<0$, and all $G_{ij}$ are greater than
two, as expected.
\end{remark}

\subsection{Proof of Theorems \ref{thm:JordanAn} and \ref{th:main0} in the $\mathcal A_n$ case.}\label{suse:proof1}

We have proved in Lemma \ref{lem-rank} that
$$
\rm{rk}({\mathcal S}^{-T}{\mathcal S}+\mathbb I)=3,
$$
so we only need to compute the remaining $3$ eigenvalues in order to prove Theorem
\ref{thm:JordanAn}. The proof is based on the following lemma:

\begin{lm}\label{lm:mod}
 All the eigenvalues of the matrix ${\mathcal S}^{-T}{\mathcal S}$ are functions of the only modular invariant
parameter $P=\sum_{\alpha=1}^n Z_\alpha+\sum_{\beta=1}^{n-3} Y_\beta$, which is the sum of all the
Teichm\"uller space variables.
\end{lm}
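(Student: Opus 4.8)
The plan is to show that the spectrum of $S^{-T}S$ collapses, up to the single modulus $P$, to a set that is forced by the low rank of Lemma~\ref{lem-rank} together with the reciprocal symmetry of $S^{-T}S$. First I would record two structural facts. Since $\det S=1$, one has $\det(S^{-T}S)=1$ and the characteristic polynomial of $S^{-T}S$ is palindromic, so its eigenvalues are stable under $\lambda\mapsto\lambda^{-1}$. Moreover, an eigenvector with eigenvalue $-1$ is exactly a vector in $\ker(S+S^{T})$, since $S^{-T}Sv=-v$ is equivalent to $(S+S^{T})v=0$; by Lemma~\ref{lem-rank} this kernel has dimension $n-3$. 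Combining the reciprocal symmetry with $\det(S^{-T}S)=1$ then pins the full spectrum to the form $\{\,\underbrace{-1,\dots,-1}_{n-3},\,\mu,\,\mu^{-1},\,(-1)^{n-1}\,\}$ for a single unknown $\mu$. Hence every eigenvalue is a function of the one quantity $\mu+\mu^{-1}$, equivalently of $\Tr(S^{-T}S)$, and it remains only to show that this quantity depends on $P$ alone.

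Second, I would identify $\mu+\mu^{-1}$ geometrically. By the duality relation (the conjugation $C^{-1}S^{-T}SC$ of (\ref{eq:mon-data}), or equivalently $M_\infty=-\frac1q S^{-1}S^{T}$ of Theorem~\ref{th:q-duality}), the spectrum of $S^{-T}S$ coincides, up to an overall scalar, with that of the monodromy around the single hole. In the $\mathcal A_n$ case this monodromy factors through the Fuchsian representation $\Delta_{0,1,n}\hookrightarrow PSL(2,\mathbb{R})$, where it is carried by $\gamma_\infty=(\gamma_1\cdots\gamma_n)^{-1}$, whose eigenvalues were computed to be $(-1)^{n-1}e^{\pm P/2}$ just after (\ref{eq:per}). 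The Minkowski realisation of subsection~\ref{ss:Min} makes this precise: writing $S+S^{T}$ as the Gram matrix of the time-like $3$-vectors $\mathbf v^{(i)}$ of (\ref{Min-orbi}), the action of $S^{-T}S$ on the $3$-dimensional image of $S+S^{T}$ is the image of $\gamma_\infty$ under the vector (adjoint) representation $PSL(2,\mathbb{R})\to SO(2,1)$. A hyperbolic element with eigenvalues $e^{\pm P/2}$ maps under this representation to one with eigenvalues $e^{P},1,e^{-P}$, and the overall scalar coming from the factor $-q$ only rescales these; hence $\mu+\mu^{-1}$, and with it the whole spectrum, is a function of $P$ alone.

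Finally I would phrase the conclusion invariantly: the eigenvalues of $S^{-T}S$ are Casimirs of the Dubrovin--Ugaglia bracket and, being the spectral data of the holonomy around the distinguished boundary, are invariant under the mapping class group; since the boundary length $P$ is the only such modular invariant, they must be functions of $P$. I expect the main obstacle to be the bridge between the $n\times n$ object $S^{-T}S$ and the $2\times 2$ holonomy $\gamma_\infty$ --- namely, verifying that the three nontrivial eigenvalues are \emph{exactly} the vector-representation image of $\gamma_\infty$, and keeping careful track of the scalar $-q$ and the signs $(-1)^{n-1}$ that will later distinguish the even and odd $n$ spectra recorded in Theorem~\ref{thm:JordanAn}. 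The rank bound of Lemma~\ref{lem-rank} and the explicit decompositions (\ref{eq:basisn}) are precisely what make this identification work.
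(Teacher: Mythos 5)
Your proposal is essentially correct, but it takes a genuinely different route from the paper's. The paper's proof of Lemma \ref{lm:mod} is soft: the eigenvalues of ${\mathcal S}^{-T}{\mathcal S}$ are monodromy data, hence Casimirs of the Dubrovin--Ugaglia bracket and central for the Goldman bracket, so $\det(\lambda S+\lambda^{-1}S^T)$ is modular invariant; being moreover a Laurent polynomial of degree at most $n$ in each $e^{Z_\alpha/2}$, it can only be a Laurent polynomial in $e^{P/2}$. The actual eigenvalues are computed only afterwards, in the proof of Theorem \ref{thm:JordanAn}, via the specialization $Y_j=-i\pi$, $Z_2=-Z_3=\cdots$, and the determinant of the matrix (\ref{abc-matrix}). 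You instead pin the spectrum structurally --- $\ker(S+S^T)$ is exactly the $(-1)$-eigenspace, so Lemma \ref{lem-rank} gives at least $n-3$ eigenvalues $-1$, and reciprocal symmetry plus $\det(S^{-T}S)=1$ force the remaining three to be $\{\mu,\mu^{-1},(-1)^{n-1}\}$ --- and then identify $\mu$ with the boundary holonomy $\gamma_\infty$. If completed, your argument proves strictly more than the lemma: it yields the full spectrum $(-1)^{n-1}\{e^{P},1,e^{-P}\}\cup\{-1,\dots,-1\}$ in one stroke, i.e.\ the eigenvalue content of Theorem \ref{thm:JordanAn}, and it explains conceptually \emph{why} only $P$ can appear; what the paper's route buys is brevity and independence from any reflection-group input, at the cost of the later ad hoc specialization.

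The bridge you flag as the main obstacle does close, and you should write it out rather than leave it as an expectation. Since $\ker(S+S^T)$ is the $(-1)$-eigenspace it is invariant under $S^{-T}S$, which therefore descends to ${\mathbb R}^n/\ker(S+S^T)$; the Minkowski model of subsection \ref{ss:Min} identifies this quotient with the span of the ${\mathbf v}^{(i)}$, nondegenerate since these are time-like and span the $2{+}1$-dimensional space for a Fuchsian configuration. Setting $\hat s_i:=\ID-e_ie_i^T(S+S^T)$, each $\hat s_i$ is an involution (the diagonal of $S+S^T$ is $2$) descending to the hyperplane reflection in ${\mathbf v}^{(i)\perp}$, which equals $-\mathrm{Ad}_{\gamma_i}$ because $\gamma_i^2=-\ID$; note that these $\hat s_i$ are precisely the monodromies $M_k=\ID-E_k(S+S^T)$ of Theorem \ref{th:q-duality} at $q=1$, and the cyclic relation $M_n\cdots M_1=M_\infty^{-1}$ with $M_\infty=-S^{-1}S^T$ gives the Coxeter-element identity $\hat s_n\cdots\hat s_1=-S^{-T}S$, a polynomial identity in the entries of $S$ that survives degeneration of $S+S^T$. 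Since $\gamma_i^{-1}=-\gamma_i$ implies $\gamma_n\cdots\gamma_1=(-1)^n\gamma_\infty$, the operator $S^{-T}S$ descends to $(-1)^{n-1}\mathrm{Ad}_{\gamma_\infty}$, and the signs reproduce the even/odd spectra of Theorem \ref{thm:JordanAn}. Three small repairs: palindromy of the characteristic polynomial does not follow from $\det S=1$ alone, but from the similarity $(S^{-T}S)^T=S^T(S^{-T}S)^{-1}S^{-T}$ (equivalently it is encoded in Bondal's normal form (\ref{eq:j-dec})); Lemma \ref{lem-rank} gives $\dim\ker(S+S^T)\ge n-3$, equality being only generic, though the inequality is all your argument uses; and your closing appeal to ``$P$ is the only modular invariant'' is exactly the paper's heuristic, which the paper makes effective through the Laurent-degree count --- once the holonomy identification is in place you do not need it at all. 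Finally, keep track of factors of $2$ in the normalization of $P$: the paper itself writes $P=2\sum Z_i+2\sum Y_j$ in (\ref{eq:per}) but $P=\sum Z_\alpha+\sum Y_\beta$ in the lemma, and your step from the $e^{\pm P/2}$ of $\gamma_\infty$ to the $e^{\pm P}$ of the vector representation inherits this ambiguity without affecting the statement that all eigenvalues depend on $P$ alone.
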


\proof This is a simple consequence of the fact that  the
eigenvalues of the matrix ${\mathcal S}^{-T}{\mathcal S}$  are part
of the monodromy data of the system (\ref{irreg}) and therefore they
must be central elements in the Dubrovin--Ugaglia bracket and
therefore of the Goldman bracket (\ref{eq:Poisson}). As a
consequence the determinant of any linear combination
$\lambda^{-1}S^{T}+\lambda S$ is a modular-invariant function. On
the other hand, this determinant is a Laurent polynomial of order
not higher than $n$ in each of $e^{Z_\alpha/2}$, which inevitably
means that this determinant is a Laurent polynomial of order not
higher than $n$ of $e^{P/2}$ alone. \endproof

The idea of the proof of Theorem \ref{thm:JordanAn} is to use the
modular invariance to choose in a special way the parameters $Z_i$,
$i=2,\dots,n-1$ and $Y_j$, $j=2,\dots,n-1$, leaving $Z_1$ and $Z_n$
arbitrary. In fact, since the eigenvalues  are modular invariants,
if we change some of $Z_i$ and $Y_j$ by preserving their total sum,
the  eigenvalues must remain the same.

Because the determinant of $\lambda^{-1}{\mathcal S}^{T}+\lambda{\mathcal S}$
is a rational function in $e^{Z_\alpha/2}$ and $e^{Y_\beta/2}$, it has a unique analytic continuation in the
domain of {\em complex values of} $Z_1,\dots,Z_n,Y_1,\dots,Y_{n-3}$. The value of the
determinant must then be conserved provided the exponential
$e^P$, $P=\sum_{\alpha=1}^n Z_\alpha+\sum_{\beta=1}^{n-3} Y_\beta$, remains invariant. We now present a
convenient choice of these, complex, parameters. We take the
representation graph (the spine) of the form depicted in Fig.~\ref{comb}, in which we specially indicated geodesic functions that will
play an important role in the proof.

\begin{figure}[h]
{\psset{unit=0.7}
\begin{pspicture}(-7,-2.5)(7,2.5)
\newcommand{\PAT}{%
\pcline(-0.87,0.5)(-0.87,2)
\pcline(0.87,0.5)(0.87,2)
\pscircle*(0,2){0.1}
}
\newcommand{\PATC}{%
\pcline(0,0.87)(4.5,0.87)
\pcline(0,-0.87)(1.4,-0.87)
\pcline(3.1,-0.87)(4.5,-0.87)
\pcline(1.4,-0.87)(1.4,-2)
\pcline(3.1,-0.87)(3.1,-2)
\pscircle*(2.25,-2){0.1}
}
\newcommand{\GEOD}[1]{
\psarc[linewidth=1.5pt,linestyle=dashed,linecolor=#1](0,2){0.6}{0}{180}
\psarc[linewidth=1.5pt,linestyle=dashed,linecolor=#1](1.73,-1){0.4}{-120}{60}
\psbezier[linewidth=1.5pt,linestyle=dashed,linecolor=#1](0.6,2)(0.4,.2)(.4,.2)(1.93,-0.65)
\psbezier[linewidth=1.5pt,linestyle=dashed,linecolor=#1](-0.6,2)(-0.6,0.5)(.53,-.8)(1.53,-1.35)
}
\newcommand{\GEODINVERSE}[1]{
\psarc[linewidth=1.5pt,linestyle=dashed,linecolor=#1](0,2){0.6}{0}{180}
\psarc[linewidth=1.5pt,linestyle=dashed,linecolor=#1](-1.73,-1){0.4}{120}{300}
\psbezier[linewidth=1.5pt,linestyle=dashed,linecolor=#1](-0.6,2)(-0.4,.2)(-.4,.2)(-1.93,-0.65)
\psbezier[linewidth=1.5pt,linestyle=dashed,linecolor=#1](0.6,2)(0.6,0.5)(-.53,-.8)(-1.53,-1.35)
}
\rput{30}(-5.5,0){\PAT}
\rput{150}(-5.5,0){\PAT}
\rput{150}(-5.5,0){\GEOD{red}}
\rput{-30}(5.5,0){\PAT}
\rput{-150}(5.5,0){\PAT}
\rput{-150}(5.5,0){\GEODINVERSE{green}}
\rput{30}(-5.5,0){
\psarc[linewidth=1.5pt,linestyle=dashed,linecolor=blue](0,2){0.6}{0}{180}
\psbezier[linewidth=1.5pt,linestyle=dashed,linecolor=blue](0.6,2)(0.4,.2)(.4,.2)(1.93,-0.65)
\psbezier[linewidth=1.5pt,linestyle=dashed,linecolor=blue](-0.6,2)(-0.6,0.5)(.53,-.8)(1.53,-1.35)
}
\rput{-30}(5.5,0){
\psarc[linewidth=1.5pt,linestyle=dashed,linecolor=blue](0,2){0.6}{0}{180}
\psbezier[linewidth=1.5pt,linestyle=dashed,linecolor=blue](-0.6,2)(-0.4,.2)(-.4,.2)(-1.93,-0.65)
\psbezier[linewidth=1.5pt,linestyle=dashed,linecolor=blue](0.6,2)(0.6,0.5)(-.53,-.8)(-1.53,-1.35)
}
\rput(-5,0){\PATC}
\rput(0.5,0){\PATC}
\pcline[linewidth=2pt,linestyle=dashed, linecolor=blue](-3.5,.4)(-.5,.4)
\pcline[linewidth=2pt,linestyle=dashed, linecolor=blue](-3.5,-.4)(-.5,-.4)
\pcline[linewidth=2pt,linestyle=dashed, linecolor=blue](3.5,.4)(.5,.4)
\pcline[linewidth=2pt,linestyle=dashed, linecolor=blue](3.5,-.4)(.5,-.4)
\rput(0,0.87){\makebox(0,0)[cc]{\Large$\cdots$}}
\rput(0,-0.87){\makebox(0,0)[cc]{\Large$\cdots$}}
\rput(0,0.4){\makebox(0,0)[cc]{\Large\color{blue}$\cdots$}}
\rput(0,-0.4){\makebox(0,0)[cc]{\Large\color{blue}$\cdots$}}
\rput(-7.5,2){\makebox(0,0){$Z_{1}$}}
\rput(-7.5,-2){\makebox(0,0){$Z_{2}$}}
\rput(-4.6,1.2){\makebox(0,0)[cc]{$Y_1$}}
\rput(-4.1,-2){\makebox(0,0){$Z_{3}$}}
\rput(-1,1.2){\makebox(0,0)[cc]{$Y_2$}}
\rput(7.6,2){\makebox(0,0){$Z_{n}$}}
\rput(7.7,-2){\makebox(0,0){$Z_{n-1}$}}
\rput(4.5,1.2){\makebox(0,0)[cc]{$Y_{n-3}$}}
\rput(4.3,-2){\makebox(0,0){$Z_{n-2}$}}
\end{pspicture}
}
\caption{\small{The fat graph for $\mathcal A_n$. The blue geodesic is $G_{1,n}$, the green one is $G_{n-1,n}$ and the red one is $G_{1,2}$.}}
\label{comb}
\end{figure}

We  choose {\em all} the $Y_j$ to be $-i\pi$, then $X_{Y_j}=\left(
                                                        \begin{array}{cc}
                                                          0 & i \\
                                                          i & 0 \\
                                                        \end{array}
                                                      \right)$
for any $j=1,\dots,n-3$. This special matrix is characterised by that $LX_YL=RX_YR=X_Y$. We also use extensively that
$R=-L^2$ and $L=R^2$. We next take
$Z_2=-Z_3=Z_4=\cdots=(-1)^{n-1}Z_{n-1}$ and leave $Z_1$ and $Z_n$ arbitrary.
Under this choice of the parameters, the entries ${\tilde G}_{ij}:=S_{ij}+S_{ji}$ simplify considerably. Namely, we obtain
\bea
{\tilde G}_{1,2}&=&-{\tilde G}_{13}={\tilde G}_{14}=\cdots=(-1)^{n-1}{\tilde G}_{1,n-1}=\tr LX_{2Z_2}RX_{2Z_1}
\nonumber
\\
{\tilde G}_{i,j}&=&(-1)^{i-j}2,\qquad 1<i\le j<n,
\nonumber
\\
{\tilde G}_{n-1,n}&=&-{\tilde G}_{n-2,n}={\tilde G}_{n-3,n}=\cdots=(-1)^{n-1}{\tilde G}_{2,n}=\tr LX_{2Z_n}RX_{2Z_{n-1}}
\nonumber
\\
{\tilde G}_{1,n}&=&\left\{\begin{array}{ll}
          e^{Z_n+Z_1}+e^{-Z_n-Z_1}, & \hbox{\ even $n$} \\
          \tr LX_{2Z_1}RX_{2Z_{n}}  & \hbox{\ odd $n$} \\
          \end{array}\right.
\label{Gspecial}
\eea
All the entries of the matrix ${\mathcal S}$ are either $\pm 2$, or
$\pm {\tilde G}_{1,2}$, or $\pm {\tilde G}_{n-1,n}$ or ${\tilde G}_{1,n}$. We are now going to show that we can re-arrange the rows
and columns of the matrix  $\lambda {\mathcal S}+\lambda^{-1}{\mathcal S}^T$ in order to obtain the form:
\be\det(\lambda {\mathcal S}+\lambda^{-1}{\mathcal S}^T)=
\left[
                            \begin{array}{cccccc}
                       \lambda+\lambda^{-1} & \lambda c & \lambda a & \lambda a & \cdots & \lambda a \\
                       \lambda^{-1} c & \lambda+\lambda^{-1} & \lambda b & \lambda b & \cdots & \lambda b \\
                       \lambda^{-1}a & \lambda^{-1}b & \lambda+\lambda^{-1} & 2\lambda & \cdots & 2\lambda \\
                    \lambda^{-1}a & \lambda^{-1}b & 2\lambda^{-1} & \lambda+\lambda^{-1} & \ddots & \vdots \\
                       \vdots & \vdots & \vdots & \ddots & \ddots & 2\lambda \\
             \lambda^{-1}a & \lambda^{-1}b & 2\lambda^{-1} & \cdots & 2\lambda^{-1} & \lambda+\lambda^{-1} \\
                            \end{array}
                          \right].
\label{abc-matrix}
\ee
For such matrix form (\ref{abc-matrix}) we can easily compute the determinant:
\bea
\det(\lambda {\mathcal S}+\lambda^{-1}{\mathcal S}^T)&=&
[(\lambda+\lambda^{-1})^2-c^2](\lambda-\lambda^{-1})^2I_{n-4}
\nonumber\\
&{}&+(\lambda+\lambda^{-1})[(\lambda+\lambda^{-1})^2+abc-a^2-b^2-c^2]I_{n-3},
\eea
where $I_k$ is $(\lambda-\lambda^{-1})^k$ times the determinant of the skew--symmetric matrix with all the
entries above the diagonal equal to the unity; this determinant is zero for odd $k$ and $1$ for even $k$. So, we obtain
\be\label{eq:det:abc}
\det(\lambda {\mathcal S}+\frac{{\mathcal S}^T}{\lambda})=\left\{
\begin{array}{ll}
[(\lambda+\lambda^{-1})^2-c^2](\lambda-\lambda^{-1})^{n-2}, & \hbox{even $n$},\\
\\
(\lambda+\lambda^{-1})[(\lambda+\lambda^{-1})^2 +abc -a^2-b^2-c^2](\lambda-\lambda^{-1})^{n-3}, &
\hbox{odd $n$}.
\end{array}\right.
\ee
Let us prove formula (\ref{abc-matrix}) and deduce the values of the eigenvalues of $J_0$ in the even and in the odd dimensional cases separately.

\vskip 2mm
\noindent {\em For even $n$}, we have that $\det(\lambda S+\lambda^{-1} S^T)$ is given by
$$
{\small{\left|
                              \begin{array}{ccccccc}
\lambda+\lambda^{-1} & \lambda {\tilde G}_{1,2} & -\lambda {\tilde G}_{1,2}
 & \lambda  {\tilde G}_{1,2} &\dots& -\lambda  {\tilde G}_{1,2} &
\lambda  {\tilde G}_{1,n} \\
\lambda^{-1} {\tilde G}_{1,2} & \lambda+\lambda^{-1} & -2\lambda & 2\lambda &\dots & -2\lambda & -\lambda  {\tilde G}_{n-1,n} \\
-\lambda^{-1} {\tilde G}_{1,2} & -2\lambda^{-1} & \lambda+\lambda^{-1} & -2\lambda&\dots & 2\lambda & \lambda  {\tilde G}_{n-1,n} \\
\dots&\dots&\dots&\dots&\dots&\dots&\dots\\
\lambda^{-1} {\tilde G}_{1,2} & 2\lambda^{-1} &\dots&- 2\lambda^{-1}
& \lambda+\lambda^{-1} & -2\lambda & -\lambda  {\tilde G}_{n-1,n} \\
-\lambda^{-1} {\tilde G}_{1,2} & -2\lambda^{-1} & \dots& 2\lambda^{-1} & -2\lambda^{-1}
& \lambda+\lambda^{-1} & \lambda  {\tilde G}_{n-1,n} \\
\lambda^{-1} {\tilde G}_{1,n} & -\lambda^{-1} {\tilde G}_{n-1,n} &\dots& \lambda^{-1} {\tilde G}_{n-1,n} & -\lambda^{-1}G_{n-1,n} &
\lambda^{-1} {\tilde G}_{n-1,n} & \lambda+\lambda^{-1} \\
                              \end{array}
                            \right|}}
$$
and multiplying the odd columns and rows by $-1$ and cyclically
permuting rows and columns $\{1,2,\dots,n-1,n\}\to\{n,1,2,\dots,
n-1\}$, we obtain the matrix of the form (\ref{abc-matrix}) with
$a=-G_{n-1,n}$, $b=G_{1,2}$, and $c=G_{1,n}$. Neither $a$ nor $b$
however contribute to the determinant (\ref{eq:det:abc}) for even
$n$, whereas, from (\ref{Gspecial}), $G_{1,n}=e^P+e^{-P}$ (because
the contribution from other $Z_i$ vanish for even $n$,
$Z_2+\cdots+Z_{n-1}=0$). For even $n$ we therefore have
\be \det
(\lambda{\mathcal S}+\lambda^{-1}{\mathcal S}^T)=
[(\lambda+\lambda^{-1})^2-(e^P+e^{-P})^2](\lambda-\lambda^{-1})^{n-2},
\ee
and the roots of the characteristic equation $\det({\mathcal
S}^{-T}{\mathcal S}-\eta)=0$ ($\eta=-\lambda^2$) are
$\eta=\{-e^P,-e^{-P},-1,\dots,-1\}$. Since the rank of ${\mathcal
S}^{-T}{\mathcal S}+1$ is less or equal three, the Jordan form (in
the case of nonzero $P$) must have $n-2$  $1\times 1$ blocks
corresponding to the eigenvalues: $-e^P$, $-e^{-P}$, and $n-4$
eigenvalues $-1$,  and one $2\times 2$ block $\left(
                                                                   \begin{array}{cc}
                                                                     -1 & 1 \\
                                                                     0 & -1 \\
                                                                   \end{array}
                                                                 \right)$.
This concludes the proof of (\ref{eq:J-Veven}) for $n$ even.

\vskip 2mm
\noindent{\em For odd $n$}  we have   that $\det(\lambda S+\lambda^{-1} S^T)$ is given by
$$
\small{
\left|
                              \begin{array}{cccccccc}
\lambda+\lambda^{-1} & \lambda {\tilde G}_{1,2} & -\lambda {\tilde G}_{1,2} & \lambda {\tilde G}_{1,2} & -\lambda G_{1,2} &
\lambda {\tilde G}_{1,2} &\lambda {\tilde G}_{1,n} \\
\lambda^{-1}{\tilde G}_{1,2} & \lambda+\lambda^{-1} & -2\lambda & 2\lambda & -2\lambda & 2\lambda & \lambda {\tilde G}_{n-1,n} \\
-\lambda^{-1}{\tilde G}_{1,2} & -2\lambda^{-1} & \lambda+\lambda^{-1} & -2\lambda & 2\lambda & -2\lambda
& -\lambda {\tilde G}_{n-1,n} \\
\dots& \dots&\dots&\dots&\dots&\dots&\dots\\
-\lambda^{-1}{\tilde G}_{1,2} & -2\lambda^{-1} &\dots & 2\lambda^{-1} & -2\lambda^{-1}
& -2\lambda & -\lambda {\tilde G}_{n-1,n} \\
\lambda^{-1}{\tilde G}_{1,2} & 2\lambda^{-1} &\dots & -2\lambda^{-1} & 2\lambda^{-1}
& \lambda+\lambda^{-1} &
\lambda {\tilde G}_{n-1,n} \\
\lambda^{-1}{\tilde G}_{1,n} & \lambda^{-1}{\tilde G}_{n-1,n} &\dots& -\lambda^{-1}{\tilde G}_{n-1,n} & \lambda^{-1}{\tilde G}_{n-1,n}&
\lambda^{-1}{\tilde G}_{n-1,n} & \lambda+\lambda^{-1} \\
                              \end{array}
                            \right|}
$$
and multiplying the odd columns and rows by $-1$ and cyclically permuting rows and
columns $\{1,2,\dots,n-1,n\}\to\{n,1,2,\dots, n-1\}$, we obtain the matrix of the form (\ref{abc-matrix})
with $a={\tilde G}_{n-1,n}$, $b=G_{1,2}$, and $c=G_{1,n}$. Now, the elements $G_{1,2}$, ${\tilde G}_{n-1,n}$, and $G_{1,n}$
(see their explicit expressions in (\ref{Gspecial})) constitute the Markov triple, that is,
$abc-a^2-b^2-c^2=(e^P-e^{-P})^2$, where $P=Z_1+Z_2+Z_n$ is the perimeter of the hole (the
remaining $Z_i$ are mutually canceled).
For odd $n$, we therefore have
\be
\det (\lambda{\mathcal S}+\lambda^{-1}{\mathcal S}^T)=
[(\lambda+\lambda^{-1})^2+(e^P-e^{-P})^2](\lambda+\lambda^{-1})(\lambda-\lambda^{-1})^{n-3},
\ee
and the roots of the characteristic equation
$\det({\mathcal S}^{-T}{\mathcal S}-\eta)=0$ ($\eta=-\lambda^2$)
are now $\eta=\{e^P,e^{-P},1,-1,\dots,-1\}$.
Since the rank of ${\mathcal S}^{-T}{\mathcal S}+1$ is less or
equal three, all these numbers are eigenvalues (for $P\ne0$) and
the Jordan form is diagonal. This concludes the proof of (\ref{eq:J-Vodd}) for $n$ odd.
\endproof

\subsubsection{Symplectic leaves corresponding to $\mathcal A_n$}

We are now ready to prove that the dimension of the symplectic leaves $\mathcal L_{\mathcal A_n}$ corresponding to $\mathcal A_n$ is
$$
\dim_{\mathbb C}\left( L_{\mathcal A_n}\right)=2(n-2)
$$
which is the double the real dimension of the Teichm\"uller space.

\proof Thanks to Theorem \ref{thm:JordanAn}, for $n$ even,
$$
J_0= n_{\lambda,1} \left(J_{\lambda,1} \oplus J_{\frac{1}{\lambda},1} \right)  +
m_{-1,2} J_{-1,2}  +n_{-1,1}\left(J_{-1,1} \oplus J_{{-1},1} \right)
$$
where
$$
n_{\lambda,1} =1,\qquad m_{-1,2}=1,\qquad n_{-1,1}=\frac{n-4}{2}.
$$
while for  $n$  odd,
$$
J_0= n_{\lambda,1} \left(J_{\lambda,1} \oplus J_{\frac{1}{\lambda},1} \right)  +
 m_{1,1} J_{1,1} +n_{-1,1}\left(J_{-1,1} \oplus J_{{-1},1} \right)
$$
where
$$
n_{\lambda,1} =1,\qquad m_{1,1}=1,\qquad n_{-1,1}=\frac{n-3}{2}.
$$
Using (\ref{eq:bondal}) we get precisely
$$
{\rm d}(S)=\frac{8-5n+n^2}{2}=\frac{n(n-1)}{2}-2(n-2).
$$
This concludes the proof of Theorem \ref{th:main0} in the $\mathcal A_n$ case.
\endproof

\subsection{Proof of Theorems \ref{thm:JordanCFP} and \ref{th:main0} in  the $CFP$ case}\label{suse:proof2}

The idea of the proof is the same as for Theorem
\ref{thm:JordanAn}. We already proved  in Lemma \ref{lem-rank} that
$\rm{rk}({\mathcal S}^{-T}{\mathcal S}+\mathbb I)=4$, so we only
need to compute the remaining $4$ eigenvalues. Lemma \ref{lm:mod} is
still valid and we will now show how to pick the parameters
$Z_i$,and $Y_j$, in a way to simplify computations.

\vskip 2mm
\noindent{\em{Odd $n$.}} In this case, we have just one
hole and we can set $e^{Y/2}=i$ and
$Z_1=-Z_2=Z_3=\cdots=-Z_{n-1}=Z_n$ as before. Then,
$G_{ij}=(-1)^{i-j}2$ for $i,j\ne 1,n$, $G_{1,2}=(-1)^jG_{1,j}=\tr
LX_{-Z_1}RX_{Z_1}$ for $2\le j\le n-1$, $G_{n-1,n}=(-1)^jG_{j,n}=\tr
LX_{Z_1}RX_{-Z_1}$ for $2\le j\le n-1$, and by multiplying the odd
columns and rows by $-1$ and cyclically permuting rows and columns
$\{1,2,\dots,n-1,n\}\to\{n,1,2,\dots, n-1\}$, we obtain the matrix
(\ref{abc-matrix}) with $a=G_{n-1,n}$, $b=G_{1,2}$, and $c=G_{1,n}$
having the same determinant (\ref{eq:det:abc}).

The characteristic equation  $\det\left(S^{-T}{S}-\eta \mathbb I\right)=0$ where $\eta=-\lambda^2$,
has $(n-3)$-fold root $\eta=-1$ and three single roots $\varphi=1$, $\varphi=e^{P}$, and $\varphi=e^{-P}$.
Then, since the rank of the matrix ${\mathcal S}^{-T}{\mathcal S}+\mathbb I)$ is four in the $CFP$ case, we obtain formula (\ref{eq:CFP1}).

\vskip 2mm
\noindent{\em{Even $n$.}}
In this  case, we have two modular-invariant parameters $P_1$ and $P_2$ such that
$\sum_{\alpha=1}^n Z_\alpha+\sum Y_{\beta=1}^{2n-6}=P_1+P_2$. We cannot now set all the variables $Y_j$ to be $i\pi$ in the
graph in Fig.~\ref{octopus}  because those are the variables that distinguish between the perimeters
of these two holes. We can set however $Z_1=-Z_2=Z_3=\cdots=-Z_n$, take two of the variables $Y$, say, the
variables $Y_1$ and $Y_{n-2}$ of the two edges (above and below) that separate $Z_2$ and $Z_3$ to be arbitrary and
set all the remaining $Y_j$ to be $i\pi$ (and $X_Y=\left({0\ i \atop i\ 0 }\right)$). We then
have {\em six} basic matrix elements, $a=G_{1,n}$, $b=G_{2,n}$, $c=G_{3,n}$,
$d=G_{1,2}$, $e=G_{1,3}$, and $f=G_{2,3}$, and the matrix $\lambda^{-1}S^{T}
+\lambda S$ reduces by the same operations of row/column multiplication by $-1$ and
cyclic permutations of row/columns to the form
\bea
&&\det\left(
  \begin{array}{ccccccc}
    \lambda+\lambda^{-1} & a\lambda & b\lambda & c\lambda & c\lambda & \cdots & c\lambda \\
    a\lambda^{-1}  & \lambda+\lambda^{-1} & d\lambda  & e\lambda & e\lambda & \cdots & e\lambda \\
    b\lambda^{-1}  &d\lambda^{-1}  & \lambda+\lambda^{-1} & f\lambda  & f\lambda &  \cdots & f\lambda \\
    c\lambda^{-1}  &e\lambda^{-1}  & f\lambda^{-1} & \lambda+\lambda^{-1}  & 2\lambda & \cdots & 2\lambda \\
    c\lambda^{-1}  &e\lambda^{-1}  & f\lambda^{-1} & 2\lambda^{-1}  & \lambda+\lambda^{-1}  & \ddots & \vdots \\
    \vdots & \vdots & \vdots & \vdots & \ddots & \ddots & 2\lambda \\
    c\lambda^{-1} & e\lambda^{-1} & f\lambda^{-1}  & 2\lambda^{-1}  & \cdots & 2\lambda^{-1} & \lambda+\lambda^{-1}  \\
  \end{array}
\right)
\nonumber
\\
&&\nn\\
&&\qquad=(\lambda-\lambda^{-1})^{n-4}
\det\left(
  \begin{array}{cccc}
    \lambda+\lambda^{-1} & a\lambda & b\lambda & c\lambda\\
    a\lambda^{-1}  & \lambda+\lambda^{-1} & d\lambda  & e\lambda \\
    b\lambda^{-1}  &d\lambda^{-1}  & \lambda+\lambda^{-1} & f\lambda  \\
    c\lambda^{-1}  &e\lambda^{-1}  & f\lambda^{-1} & \lambda+\lambda^{-1}
  \end{array}
\right)
\nonumber
\eea
We express the remaining determinant through
two invariant determinants:
$$
D_1\equiv\det\left(
  \begin{array}{cccc}
    2 & a & b & c\\
    a  & 2 & d  & e \\
    b  &d  & 2 & f  \\
    c  &e  & f & 2
  \end{array}
\right)=\bigl(e^{P_1/2}+e^{-P_1/2}-e^{P_2/2}-e^{-P_2/2}\bigr)^2,\ \hbox{at $\lambda=\pm1$}
$$
and
$$
D_2\equiv\det\left(
  \begin{array}{cccc}
    0 & a & b & c\\
    -a  & 0 & d  & e \\
    -b  &-d  & 0 & f  \\
    -c  &-e  & -f & 0
  \end{array}
\right)=\bigl(e^{P_1/2}+e^{-P_1/2}+e^{P_2/2}+e^{-P_2/2}\bigr)^2,\ \hbox{at $\lambda=\pm i$}.
$$
For the determinant in question, we have
\bea
&&\det (\lambda^{-1}S^{T}+\lambda S)
\nonumber
\\
&&\qquad=(\lambda-\lambda^{-1})^{n-4}\bigl((\lambda+\lambda^{-1})^2(\lambda-\lambda^{-1})^2
+\frac{D_1}{4}(\lambda+\lambda^{-1})^2
-\frac{D_2}{4}(\lambda-\lambda^{-1})^2\bigr)
\nonumber
\\
&&\qquad=(\lambda-\lambda^{-1})^{n-4}\biggl((\lambda^2-\lambda^{-2})^2
-4\cosh(P_1/2)\cosh(P_2/2)(\lambda^2+\lambda^{-2})
\biggr.
\nonumber
\\
&&\qquad\qquad\qquad \biggl.+4\cosh^2(P_1/2)+4\cosh^2(P_2/2)\biggr)
\label{even-n-NR}
\eea

The roots of (\ref{even-n-NR}) for $\varphi=-\lambda^2$ are
$n-4$-fold root $\varphi=-1$ and four simple roots
$\varphi=-e^{(P_1+P_2)/2}$, $\varphi=-e^{-(P_1+P_2)/2}$,
$\varphi=-e^{(P_1-P_2)/2}$, and $\varphi=-e^{-(P_1-P_2)/2}$. When
all these roots are distinct, the Jordan form is diagonal and all
the roots correspond to eigenvectors. These completes the analysis of the Jordan forms for the $CFP$ case.\endproof

\subsubsection{Symplectic leaves corresponding to the $CFP$ case}

We can now prove that the dimension of the symplectic leaves $\mathcal L_{CFP}$ corresponding to the $CFP$ case is
$$
\dim_{\mathbb C}\left( L_{\mathcal A_n}\right)= \left\{\begin{array}{lc}
3n-7 &\hbox{ for $n$ odd}\\
3n-8 &\hbox{ for $n$ even}\\
\end{array}\right.
$$
which is the double the real dimension of the Teichm\"uller space.

\proof Thanks to Theorem \ref{thm:JordanCFP}, we have that in the case of $n$ even the Jordan normal form decomposes as:
$$
J_0= n_{\lambda_1,1} \left(J_{\lambda_1,1} \oplus J_{\frac{1}{\lambda_1},1} \right) +
n_{\lambda_2,1} \left(J_{\lambda_2,1} \oplus J_{\frac{1}{\lambda_2},1} \right) +
n_{-1,1}\left(J_{-1,1} \oplus J_{{-1},1} \right)
$$
where
$$
n_{\lambda_1,1} =1,\qquad n_{\lambda_2,1}=1,\qquad n_{-1,1}=\frac{m-4}{2},
$$
and for $n$ odd
 $$
J_0= n_{\lambda,1} \left(J_{\lambda,1} \oplus J_{\frac{1}{\lambda},1} \right)  +
m_{-1,2} J_{-1,2} + m_{1,1} J_{1,1} +n_{-1,1}\left(J_{-1,1} \oplus J_{{-1},1} \right)
$$
where
$$
n_{\lambda,1} =1,\qquad m_{-1,2}=1,\qquad m_{1,1}=1,\qquad n_{-1,1}=\frac{m-5}{2}.
$$
By using (\ref{eq:bondal}) we conclude the proof of Theorem \ref{th:main0} in the $CFP$ case.
\endproof

\subsection{Complexification}\label{suse:c}

In this section we observe that the Stokes matrices belonging to the
Teichm\"uller symplectic leaves $\mathcal L_{\mathcal A_n}$ and $\mathcal
L_{CFP}$ can be parameterized in terms of {\it complex
coordinates}\/ $Z_1,\dots,Z_n$, $Y_1,\dots,Y_k$ where $k=n-3$ in the
$\mathcal A_n$ case and $k=2n-6$ in the $CFP$ case by the same formulae
$$
S_{ij} = - \Tr\gamma_i \gamma_j, \qquad i<j,
$$
where $\gamma_i,\gamma_j$ are now matrices in $SL_2(\mathbb C)$
still given by formulae (\ref{eq:basisn}) and (\ref{eq:basisnn}) with complex $Z_1,\dots,Z_n$, $Y_1,\dots,Y_k$ .

When the coordinates $Z_i$ become complex, we can still use the same parameterization
of elements of the discretely acting group, which becomes now a finitely generated subgroup of
$PSL(2,{\mathbb C})$, not $PSL(2,{\mathbb R})$, i.e., a {\em Kleinian} group.
This Kleinian
group $\Delta_{g'}\subset PSL(2,{\mathbb C})$ describes now a
handlebody, that is, the quotient of the upper half-space ${\mathbb H}^+_3:={\mathbb C}\times {\mathbb R}^+$ by the action
of  $\Delta_{g'}$. The handlebody is geometrically  a filled Riemann surface whose boundary is a
closed Riemann surface of genus $g'=2g+s-1$  obtained from
the action of this group on the boundary of ${\mathbb H}^+_3$, i.e.,
on the complex plane ${\mathbb C}$, and admits a Schottky uniformisation.

Note that in this approach we {\em do not} present the three-dimensional manifold as a direct product of
a Riemann surface (with holes) and a time interval; instead we have an actual handlebody endowed with the set of
closed geodesics inside it; each closed geodesic corresponds, as before, to a conjugacy class of the
Kleinian group.

Note that, in this case, we loose the
distinction between holes and handles of the original Riemann surface $\Sigma_{g,s}$: if we consider two Riemann surfaces $\Sigma_{g_1,s_1}$ and  $\Sigma_{g_2,s_2}$ such that they are described by the same number of shear coordinates, or in other words such that
$\dim\left({\mathcal T}_{g_1,s_1}\times{\mathbb R}^{s_1}\right)=
\dim\left({\mathcal T}_{g_2,s_2}\times{\mathbb R}^{s_2}\right)$, they can be
considered as different parameterisations of
the {\em same} handle--body, as we demonstrate on the example below.

\begin{example}
Complexification of the Teichm\"uller space ${\mathcal T}_{1,1}$ of
a torus with one hole and of the Teichm\"uller space ${\mathcal
T}_{0,3}$ of a sphere with three holes.

In Fig.~\ref{fi:schottky} the original (two-dimensional) Riemann
surface is obtained under the action of a Kleinian group in
${\mathbb H}^+_3$ restricted to the real vertical slice ${\mathbb
H}^+_2$. Of course, this is possible only when the real slice of the
Kleinian group is simultaneously a Fuchsian group itself,
i.e., a discrete subgroup of $PSL(2,{\mathbb R})$. However, we can
continuously vary the parameters $X_i$ in the complex domain to
ensure a smooth transition between two patterns, as shown in
Fig.~\ref{fi:schottky}.
\begin{figure}[h]
\begin{center}
{\psset{unit=0.6}
\begin{pspicture}(-6,-6)(6,6)
\newcommand{\PAT}{%
\pcline(-6,0)(6,0)
\pcline(-6,0)(-6,4)
\pcline(6,0)(6,4)
\psbezier(-6,4)(-3,3.6)(3,4.4)(6,4)
\pcline(7.5,1.5)(4.5,-1.5)
\pcline(-6,0)(-7.5,-1.5)
\pcline(4.5,-1.5)(-7.5,-1.5)
\pcline(7.5,1.5)(6,1.5)
\pcline[linestyle=dashed](4.5,1.5)(6,1.5)
\pcline[linestyle=dashed](-6,0)(-4.5,1.5)
\rput(-6.3,-1.1){\makebox(0,0){${\mathbb C}$}}
}
\rput(0,2){
\psellipse[linewidth=1.5pt,linecolor=red](-4.3,0)(1,.4)
\psframe[linecolor=white, fillstyle=solid, fillcolor=white](-5.4,0.5)(-3.2,0)
\psellipse[linewidth=1.5pt,linestyle=dashed,linecolor=red](-4.3,0)(1,.4)
\psarc[linecolor=red](-4.3,0){1}{0}{180}
\psellipse[linewidth=1.5pt,linecolor=blue](-1.8,0)(.7,.28)
\psframe[linecolor=white, fillstyle=solid, fillcolor=white](-2.8,0.4)(-1,0)
\psellipse[linewidth=1.5pt,linestyle=dashed,linecolor=blue](-1.8,0)(.7,.28)
\psarc[linecolor=blue](-1.8,0){.7}{0}{180}
\psellipse[linewidth=1.5pt,linecolor=red](1.3,0)(.7,.28)
\psframe[linecolor=white, fillstyle=solid, fillcolor=white](.5,0.4)(2.1,0)
\psellipse[linewidth=1.5pt,linestyle=dashed,linecolor=red](1.3,0)(.7,.28)
\psarc[linecolor=red](1.3,0){.7}{0}{180}
\psellipse[linewidth=1.5pt,linecolor=blue](4,0)(1,.4)
\psframe[linecolor=white, fillstyle=solid, fillcolor=white](2.9,0.5)(5.1,0)
\psellipse[linewidth=1.5pt,linestyle=dashed,linecolor=blue](4,0)(1,.4)
\psarc[linecolor=blue](4,0){1}{0}{180}
\psarc[linewidth=1.5pt,linestyle=dashed,linecolor=blue](1.1,0){2.7}{0}{180}
\psarc[linewidth=1.5pt,linestyle=dashed,linecolor=red](-1.5,0){2.7}{0}{180}
\psarc[linewidth=2pt,linecolor=blue]{<-}(-.25,0){1.1}{50}{130}
\psarc[linewidth=2pt,linecolor=red]{<-}(-.25,0){1.1}{230}{310}
\rput(0,0){\PAT}
\rput(-4.1,1.6){\makebox(0,0){$A$}}
\rput(2.7,1){\makebox(0,0){$B$}}
\rput(1.5,3.5){\makebox(0,0){\small$\Sigma_{1,1}{=}{\mathbb H}^+_2/\Delta_{1,1}$}}
\rput(-5,3){\makebox(0,0){$Z_i$}}
\pscircle[linewidth=0.5pt](-5,3){.6}
}
\pcline[linewidth=2pt]{->}(0,.2)(0,-.6)
\rput(0,-5){
\psellipse[linewidth=1.5pt,linecolor=red](-4.3,0)(1,.4)
\psframe[linecolor=white, fillstyle=solid, fillcolor=white](-5.4,0.5)(-3.2,0)
\psellipse[linewidth=1.5pt,linestyle=dashed,linecolor=red](-4.3,0)(1,.4)
\psarc[linecolor=red](-4.3,0){1}{0}{180}
\psellipse[linewidth=1.5pt,linecolor=red](-1.8,0)(.7,.28)
\psframe[linecolor=white, fillstyle=solid, fillcolor=white](-2.8,0.4)(-1,0)
\psellipse[linewidth=1.5pt,linestyle=dashed,linecolor=red](-1.8,0)(.7,.28)
\psarc[linecolor=red](-1.8,0){.7}{0}{180}
\psellipse[linewidth=1.5pt,linecolor=blue](1.3,0)(.7,.28)
\psframe[linecolor=white, fillstyle=solid, fillcolor=white](.5,0.4)(2.1,0)
\psellipse[linewidth=1.5pt,linestyle=dashed,linecolor=blue](1.3,0)(.7,.28)
\psarc[linecolor=blue](1.3,0){.7}{0}{180}
\psellipse[linewidth=1.5pt,linecolor=blue](4,0)(1,.4)
\psframe[linecolor=white, fillstyle=solid, fillcolor=white](2.9,0.5)(5.1,0)
\psellipse[linewidth=1.5pt,linestyle=dashed,linecolor=blue](4,0)(1,.4)
\psarc[linecolor=blue](4,0){1}{0}{180}
\psarc[linewidth=1.5pt,linestyle=dashed,linecolor=blue](2.7,0){1.2}{0}{180}
\psarc[linewidth=1.5pt,linestyle=dashed,linecolor=red](-3.1,0){1.2}{0}{180}
\rput(0,0){\PAT}
\rput(-3.7,1.6){\makebox(0,0){$A'$}}
\rput(2.7,1.6){\makebox(0,0){$B'$}}
\rput(1.5,3.5){\makebox(0,0){\small$\Sigma_{0,3}{=}{\mathbb H}^+_2/\Delta_{0,3}$}}
\rput(-5,3){\makebox(0,0){$X_i$}}
\pscircle[linewidth=0.5pt](-5,3){.6}
}
\end{pspicture}
}
\caption{\small Deformation of the Fuchsian group transforming the torus with one hole into
the sphere with three holes. We let $A$ and $B$ denote the geodesics corresponding to the respective
$A$- and $B$-cycles on $\Sigma_{1,1}$; their images $A'$ and $B'$ are the geodesics corresponding to the
perimeters of two of the holes in $\Sigma_{0,3}$; the perimeter of the third hole is $B'A'$.}
\label{fi:schottky}
\end{center}
\end{figure}
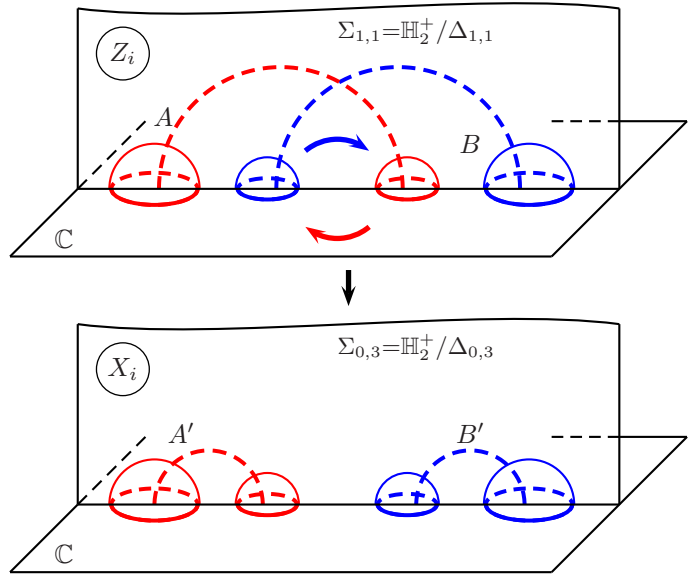

Note that on the intermediate stages of the transition process in
Fig.~\ref{fi:schottky} we have no embedded two-dimensional
(geodesically closed) Riemann surface inside the handlebody; it is
reconstructed only when the group again becomes Fuchsian.

Although the two Riemann surfaces in Fig.~\ref{fi:schottky}, $\Sigma_{1,1}$ and $\Sigma_{0,3}$, have
different topologies, their sets of geodesic lengths are the same, so we say they are {\em isospectral.}

We introduce the set of (decorated) Teichm\"uller space coordinates $Z_i$, $i=1,2,3$,  for ${\mathcal T}_{1,1}$
and $X_i$, $i=1,2,3$, for ${\mathcal T}_{0,3}$; then,
in order for the spectra of geodesic functions to coincide, it suffices to make the identification
(up to the action of the mapping class group in each of the surfaces)
\be
\label{identification}
e^{P_i/2}+e^{-P_i/2}=e^{Z_i/2+Z_{i+1}/2}+e^{-Z_i/2-Z_{i+1}/2}+e^{-Z_i/2+Z_{i+1}/2},\quad i=1,2,3,
\ee
where $P_i=X_i+X_{i+1}$ are the perimeters of three holes of $\Sigma_{0,3}$ and the
(standard) geodesic functions $G_{i,i+1}$ for $\Sigma_{1,1}$ stand in the right-hand sides.

Note that equations (\ref{identification}) not always admit real solutions in terms of $Z_i$ for a given real $X_i$: the
obstruction is provided by the Markov element,
\be
\label{Markov}
{\mathcal M}=G_{1,2}G_{1,3}G_{2,3}-G_{1,2}^2-G_{1,3}^2-G_{2,3}^2.
\ee
In the case of the torus ${\Sigma}_{1,1}$ with real $Z_i$, we have the inequality ${\mathcal M}\ge 0$,
whereas in the case of the sphere ${\Sigma}_{0,3}$ with real $X_i$,
we have the inequality ${\mathcal M}\ge -4$, so Eqs.~\ref{identification} admit real solutions both in
$Z_i$ and in $X_i$ iff ${\mathcal M}\ge 0$.

In Fig.~\ref{fi:correspondence} we depict the explicit relation between the geodesic functions and indicate the
image of the boundary curve. In Fig.~\ref{fi:fatgraph} the same correspondence is presented for the spines
$\Gamma_{1,1}$ and $\Gamma_{0,3}$. Note that neither the intersection indices between the curves nor the Poisson brackets are
preserved under this identification.
\begin{figure}[h]
\begin{center}
{\psset{unit=0.6}
\begin{pspicture}(-8,-3.5)(8,3.5)
\newcommand{\ELLARC}[7]{
\parametricplot[linecolor=#1, linewidth=#2 pt, linestyle=#3]{#4}{#5}{#6 t cos mul #7 t sin mul}
}
\definecolor{lightblue}{rgb}{.85, .5, 1}
\newcommand{\PATS}[1]{
\pscircle[linewidth=1pt,fillstyle=solid,fillcolor=gray](1.8,0){.3}
\pscircle[linewidth=1pt,linecolor=#1](1.8,0){.6}
\psarc[linewidth=1.5pt,linestyle=dashed,linecolor=lightblue](1.8,0){1}{-120}{120}
\pcline[linewidth=1.5pt,linestyle=dashed,linecolor=lightblue](-0.2,0)(1.3,0.86)
\pcline[linewidth=1.5pt,linestyle=dashed,linecolor=lightblue](0.1,-0.17)(1.3,-0.86)
}
\newcommand{\PAT}{%
\rput(-1,-0.3){\ELLARC{black}{1}{solid}{-45}{225}{3.2}{2.5}}
\rput(-1,-0.9){\ELLARC{black}{1}{solid}{20}{160}{.9}{.4}}
\rput(-1,-0.3){\ELLARC{black}{1}{solid}{210}{330}{1.5}{.6}}
\rput(-4.8,1){\makebox(0,0){$Z_i$}}
\pscircle[linewidth=0.5pt](-4.8,1){.6}
\rput(-1,-2.75){\ELLARC{lightblue}{1}{dashed}{0}{180}{1.85}{.5}}
\rput(-1,-2.75){\ELLARC{lightblue}{2}{dashed}{180}{360}{1.85}{.5}}
\rput(-4.1,-2.75){\ELLARC{black}{1}{solid}{-45}{45}{1.2}{.94}}
\rput(2.1,-2.75){\ELLARC{black}{1}{solid}{135}{225}{1.2}{.94}}
}
\rput(-3,1){
\rput(-0.1,0){\ELLARC{green}{1}{dashed}{45}{210}{1.9}{1.9}}
\rput(-0.1,0){\ELLARC{green}{1}{solid}{-150}{45}{1.9}{1.9}}
\rput(-1,-0.6){\ELLARC{blue}{1}{solid}{0}{360}{2}{0.8}}
\rput{-80}(-1,.85){\ELLARC{red}{1}{solid}{180}{360}{1.3}{0.5}}
\rput{-80}(-1,.85){\ELLARC{red}{1}{dashed}{0}{180}{1.3}{0.5}}
\rput(0,0){\PAT}
\rput(-2,1.5){\makebox(0,0){$A$}}
\rput(-3.5,-.6){\makebox(0,0){$B$}}
\rput(1,.7){\makebox(0,0){$BA$}}
\rput(-1,-3.8){\makebox(0,0){$ABA^{-1}B^{-1}$}}
\rput(-4.5,-3){\makebox(0,0){$\Sigma_{1,1}$}}
}
\pcline[linewidth=2pt]{->}(0,0)(1,0)
\rput(4,.2){
\rput{120}(0,0){\PATS{red}}
\rput(0,0){\PATS{blue}}
\rput{-120}(0,0){\PATS{green}}
\rput(-2.5,2){\makebox(0,0){$A'$}}
\pcline[linewidth=0.5pt]{->}(-2.2,1.9)(-1.5,1.7)
\rput(3.6,0){\makebox(0,0){$B'$}}
\pcline[linewidth=0.5pt]{->}(3.2,0)(2.4,0)
\rput(-2.7,-2){\makebox(0,0){$B'A'$}}
\pcline[linewidth=0.5pt]{->}(-2,-1.9)(-1.5,-1.7)
\rput(3.5,-2){\makebox(0,0){$A'B'{A'}^{-1}{B'}^{-1}$}}
\pcline[linewidth=0.5pt]{->}(2,-1.7)(1.8,-1.05)
\rput(0.2,-3.4){\makebox(0,0){$\Sigma_{0,3}$}}
\rput(1.5,2){\makebox(0,0){$X_i$}}
\pscircle[linewidth=0.5pt](1.5,2){.6}
}
\end{pspicture}
}
\caption{\small The transformation between geodesics on $\Sigma_{1,1}$ and $\Sigma_{0,3}$.}
\label{fi:correspondence}
\end{center}
\end{figure}

\begin{figure}[h]
\begin{center}
{\psset{unit=0.6}
\begin{pspicture}(-6,-4)(6,3)
\newcommand{\ELLARC}[7]{
\parametricplot[linecolor=#1, linewidth=#2 pt, linestyle=#3]{#4}{#5}{#6 t cos mul #7 t sin mul}
}
\definecolor{lightblue}{rgb}{.85, .5, 1}
\rput(-3,0.5){
\psarc[linewidth=1.5pt,linecolor=lightblue](0,0){2}{0}{150}
\psarc[linewidth=1.5pt,linecolor=lightblue](0,0){1}{0}{150}
\psarc[linewidth=1.5pt,linestyle=dashed,linecolor=red](0,0){1.7}{0}{150}
\psbezier[linewidth=1.5pt,linestyle=dashed,linecolor=red](-1.47,.85)(-2.17,-.36)(-3,-0.7)(-3.46,-0.7)
\psarc[linewidth=1.5pt,linecolor=lightblue](-3.46,2){2}{270}{330}
\psarc[linewidth=1.5pt,linecolor=lightblue](-3.46,2){3}{270}{330}
\pcline[linewidth=1.5pt,linecolor=lightblue](2,0)(2,-1)
\pcline[linewidth=1.5pt,linestyle=dashed,linecolor=red](1.7,0)(1.7,-1)
}
\rput(-5.46,0.5){
\psarc[linewidth=15pt,linecolor=white](0,0){1.5}{30}{180}
\psarc[linewidth=1.5pt,linecolor=lightblue](0,0){2}{30}{180}
\psarc[linewidth=1.5pt,linecolor=lightblue](0,0){1}{30}{180}
\psarc[linewidth=1.5pt,linestyle=dashed,linecolor=green](0,0){1.7}{30}{180}
\psarc[linewidth=15pt,linecolor=white](3.46,2){2.5}{210}{270}
\psarc[linewidth=1.5pt,linecolor=lightblue](3.46,2){2}{210}{270}
\psarc[linewidth=1.5pt,linecolor=lightblue](3.46,2){3}{210}{270}
\pcline[linewidth=1.5pt,linecolor=lightblue](-2,0)(-2,-1)
\psbezier[linewidth=1.5pt,linestyle=dashed,linecolor=green](1.47,.85)(2.17,-.36)(3,-0.7)(3.46,-0.7)
\pcline[linewidth=1.5pt,linestyle=dashed,linecolor=green](-1.7,0)(-1.7,-1)
}
\rput(-4.23,1.2){
\psbezier[linewidth=1.5pt,linestyle=dashed,linecolor=blue](0,0)(3.6,-2.7)(3,1.5)(0.6,0.4)
\psbezier[linewidth=1.5pt,linestyle=dashed,linecolor=blue](0,0)(-2.4,1.8)(-4,-2.2)(-0.4,-0.4)
}
\rput(-4.23,-0.5){
\psarc[linewidth=1.5pt,linestyle=dashed,linecolor=green](2.23,0){.3}{0}{90}
\psarc[linewidth=1.5pt,linestyle=dashed,linecolor=red](-2.23,0){.3}{90}{180}
}
\rput(-4.23,-0.5){
\psarc[linewidth=1.5pt,linecolor=lightblue](0,0){3.23}{180}{360}
\psarc[linewidth=1.5pt,linecolor=lightblue](0,0){2.23}{180}{360}
\psarc[linewidth=1.5pt,linestyle=dashed,linecolor=green](-0.2,0){2.73}{180}{360}
\psarc[linewidth=1.5pt,linestyle=dashed,linecolor=red](0.2,0){2.73}{180}{360}
}
\rput(-7.5,2.5){\makebox(0,0){$Z_1$}}
\rput(-1,2.5){\makebox(0,0){$Z_2$}}
\rput(-4.26,-2){\makebox(0,0){$Z_3$}}
%
\pcline[linewidth=2pt]{->}(0,0)(1,0)
\rput(5,0){
\psarc[linewidth=1.5pt,linecolor=lightblue](0,0.5){3}{0}{180}
\psarc[linewidth=1.5pt,linecolor=lightblue](0,0.5){2}{0}{180}
\psarc[linewidth=1.5pt,linecolor=lightblue](0,-0.5){3}{180}{360}
\psarc[linewidth=1.5pt,linecolor=lightblue](0,-0.5){2}{180}{360}
\pcline[linewidth=1.5pt,linecolor=lightblue](-2,0.5)(2,0.5)
\pcline[linewidth=1.5pt,linecolor=lightblue](-2,-0.5)(2,-0.5)
\pcline[linewidth=1.5pt,linecolor=lightblue](-3,0.5)(-3,-0.5)
\pcline[linewidth=1.5pt,linecolor=lightblue](3,0.5)(3,-0.5)
\psarc[linewidth=1.5pt,linestyle=dashed,linecolor=green](0,0.5){2.7}{0}{180}
\psarc[linewidth=1.5pt,linestyle=dashed,linecolor=green](0,-0.5){2.7}{180}{360}
\pcline[linewidth=1.5pt,linestyle=dashed,linecolor=green](2.7,0.5)(2.7,-0.5)
\pcline[linewidth=1.5pt,linestyle=dashed,linecolor=green](-2.7,0.5)(-2.7,-0.5)
\psarc[linewidth=1.5pt,linestyle=dashed,linecolor=blue](0,0.5){2.3}{0}{180}
\psarc[linewidth=1.5pt,linestyle=dashed,linecolor=blue](-2,0.5){.3}{180}{270}
\psarc[linewidth=1.5pt,linestyle=dashed,linecolor=blue](2,0.5){.3}{270}{360}
\pcline[linewidth=1.5pt,linestyle=dashed,linecolor=blue](-2,0.2)(2,0.2)
\psarc[linewidth=1.5pt,linestyle=dashed,linecolor=red](0,-0.5){2.3}{180}{360}
\psarc[linewidth=1.5pt,linestyle=dashed,linecolor=red](-2,-0.5){.3}{90}{180}
\psarc[linewidth=1.5pt,linestyle=dashed,linecolor=red](2,-0.5){.3}{0}{90}
\pcline[linewidth=1.5pt,linestyle=dashed,linecolor=red](-2,-0.2)(2,-0.2)
}
\rput(2,2.5){\makebox(0,0){$X_1$}}
\rput(5,1){\makebox(0,0){$X_2$}}
\rput(8,-2.5){\makebox(0,0){$X_3$}}
\end{pspicture}
}
\caption{\small The transformation between geodesics in Fig.~\ref{fi:correspondence} depicted for the spines
$\Gamma_{1,1}$ and $\Gamma_{0,3}$.}
\label{fi:fatgraph}
\end{center}
\end{figure}

\end{example}

\section{Conclusion}\label{se:con}

Theorems \ref{thm:JordanAn} and \ref{thm:JordanCFP} characterise the Stokes
matrices arising in the
Teichm\"uller theory of a Riemann sphere with one hole and $n$ orbifold points and
of a Riemann surface of genus $g$ with one or two holes respectively.

In section \ref{se:bondal},   we have seen that all Strokes matrices belonging to
the degenerated symplectic leaves $\mathcal L_{\mathcal A_n}$ and
$\mathcal L_{CFP}$ can be parameterised in terms of complex
coordinates $Z_1,\dots,Z_n,Y_1,\dots,Y_k$, where $k=n-3$ in the
${\mathcal A_n}$ case and $k=2 n-6$ in the $CFP$ case.

In order to characterise the Frobenius Manifold structure
corresponding to these degenerated symplectic leaves one possible
strategy is to determine the solution $V(u_1,\dots,u_n)$ of the
isomonodromic deformation equation (\ref{15}) and then to use
Dubrovin's isomonodromicity theorem part III in \cite{Dub7}  to
reconstruct the metric, the flat coordinates, the pre--potential and
the structure constants of the Frobenius manifold. Unfortunately at
the moment this strategy fails at the very first step, i.e. we are
unable to determine  $V(u_1,\dots,u_n)$, even in the simplest case,
i.e. for $n=3$.  We are going to explain what happens in this case
in the next subsection and then in subsection \ref{suse:higher} we
will say a few words about the case of $n>3$.

\subsection{Case $n=3$}\label{se:PVI}

For $n=3$ we deal only with  $\mathcal L_{\mathcal A_3}$  (the CFP
case for $n=3$ is completely equivalent to this one up to doubling
of the shear coordinates). The geodesics $G_{ij}$ are given by the
following formula in which we use cyclic notation:
\be\label{eq:GZ}
G_{i ,i+1}=  e^{Z_i+Z_{i+1}} + e^{Z_i-Z_{i+1}} + e^{-Z_i-Z_{i+1}},
\ee
which for $Z_1,Z_2,Z_3\in\mathbb R$ are strictly bigger than $2$. In
this case all Stokes matrices in the generic symplectic leaves can
be parameterised in terms of the complexified shear coordinates
$Z_1,Z_2,Z_3$, simply imposing
$$
S=\left(\begin{array}{ccc}1&G_{1,2}&G_{3,1}\\
0&1&G_{2,3}\\ 0&0&1\\
\end{array}\right),
$$
where now $G_{i,j}$ are given by (\ref{eq:GZ}) with complex $Z_1,Z_2,Z_3$.

For generic values of the central element $p=Z_1+Z_2+Z_3$, the
Jordan normal form $J_0$ of the monodromy around $0$ of system
(\ref{irreg}) is actually diagonal,
$$
J_0=\left(
\begin{array}{ccc}
e^{p}&0&0\\
0&e^{-p}&0\\
0&0&1\\ \end{array}
\right),
$$
so that the matrix $V$ is non resonant and the Stokes matrix $S$
determines uniquely the local solutions $V(u_1,u_2,u_3)$ of the
isomonodromic deformation equations  (\ref{15}). In this case the
isomonodromic deformation equations reduce to a special case of the
sixth Painlev\'e equation \cite{Dub6}
\begin{eqnarray}\label{eq:PVI}
\ddot y&=&{1\over2}\left({1\over y}+{1\over y-1}+{1\over y-t}\right) \dot y^2 -
\left({1\over t}+{1\over t-1}+{1\over y-t}\right)\dot y+\nn\\
&+&{y(y-1)(y-t)\over t^2(t-1)^2}\left[\frac{(2\mu-1)^2}{2} +\frac{1}{2} {t(t-1)\over(y-t)^2}\right],
\end{eqnarray}
where $\mu=\frac{Z_1+Z_2+Z_3}{4 i\pi}$,  $t=\frac{u_2-u_1}{u_3-u_1}$ and
$$
y = \frac{t\left(V_{12} V_{23}+\mu V_{13}\right)^2}{(t-1)(\mu + V_{12}^2)+
t\left(V_{12}V_{23}+\mu V_{13}\right)^2},
$$
so that the entries in the Stokes matrix uniquely determine the
local solutions of this special case (\ref{eq:PVI}) of the sixth
Painlev\'e equation.

The generic solutions of the sixth Painlev\'e equation are {\it
irreducible}\/ transcendental functions, i.e. they  cannot be
expressed via elementary or classical transcendental functions by
simple operations. Of course some special solutions may be
reducible:  indeed all algebraic solutions of (\ref{eq:PVI}) were
classified in \cite{DM} and \cite{mazz1}, and the so called {\it
classical solutions,}\/ solutions that can be expressed in terms of
hypergeometric functions were classified in \cite{W}.  However, in
the geometric case, i.e. for $Z_1,Z_2,Z_3\in\mathbb R$, the
solutions are certainly irreducible: in \cite{DM} and \cite{mazz1}
it was proved that in order to have algebraic solutions, a necessary
condition is that $|S_{i,j}|<2$, which is clearly violated in the
geometric case. Moreover, using the results of \cite{mazz}, it is
rather straightforward to prove that these solutions are never of
hypergeometric type.

Another nasty surprise is given by looking at the asymptotic
behaviour of the geometric solutions near the critical points.
Indeed most  PVI solutions have asymptotic behaviour of algebraic
type, namely given $\sigma_i$, $i=1,2,3$  complex numbers such that
$$
2 \sin\frac{\pi\sigma_i}{2}= S_{jk}, \quad i\neq j,k,\quad\hbox{and}
\quad \Re(\sigma_i)\in]0,1[,
$$
the corresponding PVI solution has the following asymptotic behaviours of algebraic type \cite{jimbo}:
$$
y(t)\sim\left\{\begin{array}{lc}
a_0 t^{1-\sigma_3} (1+\mathcal O(t))&\hbox{for } t\to 0,\\
1-a_1 (1-t)^{1-\sigma_2} (1+\mathcal O(1-t))&\hbox{for } t\to 1,\\
a_\infty t^{\sigma_1}  (1+\mathcal O(\frac{1}{t}))&\hbox{for } t\to \infty.\\
\end{array}\right.
$$
However, for $S_{i,j}=G_{i,j}>2$, we have $\sigma_i=1+ i \nu_i$,
$\nu_i\in\mathbb R$ for all $i=1,2,3$. In this case the asymptotics
are no longer of algebraic type, but become very complicated
\cite{guzzetti1}. For example near $0$ we have:
$$
y(t)\sim \frac{1}{\sin^2\left(\frac{\nu}{2}\log(x) + \phi +\frac{\nu}{2}\ F_1(x)/F(x) \right)},
$$
where $\phi$ is a phase parameter and $F(x), F_1(x)$ are the two
Jacobi elliptic integrals. This makes all asymptotic computations of
$V$ and of the metric, the flat coordinates, the pre--potential and
the structure constants of the Frobenius manifold extremely involved
if not impossible.

\subsection{Higher $n$}\label{suse:higher}

First observe that the discrepancy $d$ between the dimension of the
generic symplectic leaves and the dimension of the leaves $\mathcal
L_{\mathcal A_n}$ and $\mathcal L_{CFP}$ is given by:
$$
d_{\mathcal A_n}:=\dim(\mathcal L_{generic})-\dim(\mathcal L_{\mathcal A_n})=
\left\{\begin{array}{lc}
\frac{1}{2}(n-3)^2&\hbox{for } n \hbox{ odd}\\
\frac{1}{2}(n-2)(n-4)&\hbox{for } n \hbox{ even}\\
\end{array}\right.
$$
$$
d_{CFP}:=\dim(\mathcal L_{generic})-\dim(\mathcal L_{CFP})=
\left\{\begin{array}{lc}
\frac{1}{2}(n-3)(n-5)&\hbox{for } n \hbox{ odd}\\
\frac{1}{2}(n-4)^2&\hbox{for } n \hbox{ even}\\
\end{array}\right.
$$
we see that for $n=3,4$ the leaves  $\mathcal L_{\mathcal A_n}$ are
generic, while for $n=4,5$ the leaves  $\mathcal L_{CFP}$ are
generic.  As we have observed above, this fact is at the root of why
we can't actually solve the isomonodromic deformation equations
(\ref{15}): for small $n$ we deal with "generic solutions"  which,
as we have seen above, are irreducible transcendental functions.

For $n>5$ the discrepancy $d$ between the dimension of the generic
symplectic leaves and the dimension of  the leaves  $\mathcal
L_{\mathcal A_n}$ and  $\mathcal L_{CFP}$  is non zero. In terms of
solutions $V(u_1,\dots,u_n)$ of the isomonodromic deformation
equation (\ref{15}), this means that the matrix function
$V(u_1,\dots,u_n)$ satisfies extra $d$ independent equations. These
are algebraic equations that can be obtained by observing that as
soon as $n$ is large enough, $J_0$ has a block diagonal form in
which one block is the minus identity. This means that $V$ is
resonant, and in principle we should have
$$
J_0=\exp(2\pi i\mu)\exp(2\pi i R),
$$
where $R$ is a nilpotent matrix satisfying (\ref{lev1}) which can be
recursively determined in terms of the entries of $V$. When a minus
identity diagonal block appears, all off diagonal entries
corresponding to that diagonal block must be zero, leading to extra
equations for $V$. For example in the $\mathcal A_n$ case, for $n$
even we have $n-2$ eigenvalues equal to $-1$, so we should expect
$R$ to have $\frac{(n-2)(n-3)}{2}$ off diagonal entries. Since on
our degenerated symplectic leaf only one of those in non zero, we
expect $\frac{(n-4)(n-1)}{2}$ equations of which only
$d=\frac{(n-4)^2}{2}$  are independent. Following the same train of
thoughts as in \cite{DM1}, this implies that  the solution
$V(u_1,\dots,u_n)$ of the isomonodromic deformation equation
(\ref{15}) corresponding to the degenerated symplectic leaves can be
in fact reduced to the Garnier system both in the $\mathcal A_n$ and
in the $CFP$ case. Work on this reduction is still in progress.



\begin{thebibliography}{99}

\footnotesize\itemsep=0pt

\bibitem{Bondal}
A.~Bondal, {\it A symplectic groupoid of triangular bilinear forms
and the braid groups}, preprint IHES/M/00/02 (Jan. 2000); {\sl Izv.
Math.}, {\bf 68} (2004) 659--708.


\bibitem{Ch1}
Chekhov L., Teichm\"uller theory of bordered surfaces,
{\it SIGMA Symmetry Integrability Geom. Methods Appl.}, {\bf 3} (2007) Paper 066, 37 pp. (electronic).

\bibitem{Ch1a}
Chekhov L. O., Riemann surfaces with orbifold point,
{\it Proc. Steklov Math. Inst.}, (2009) to appear.

\bibitem{ChF}
Chekhov L., Fock V., Talk at St.~Petersburg
Meeting on Selected Topics in Mathematical Physics (May 26--29, 1997).\\
 Chekhov L., Fock V., A quantum Techm\"uller space, {\it Theor. and Math.
Phys.} {\bf120} (1999), 1245--1259, {http://arxiv.org/abs/math.QA/9908165}{math.QA/9908165}.\\
Chekhov L., Fock V., Quantum mapping class group,
pentagon relation, and geodesics,
{\it Proc. Steklov Math. Inst.}
{\bf 226} (1999), 149--163.

\bibitem{ChF2}
Chekhov L.O., Fock V.V., Observables in 3d gravity and geodesic algebras,
{\it Czech. J.~Phys.} {\bf 50} (2000), 1201--1208.

\bibitem{ChP}
L.O.Chekhov and R.C. Penner, On quantizing Teichmuller and
Thurston theories, in: Handbook on Teichmuller Theory, Vol.1 (IRMA
Lectures in Mathematics and Physics, Vol.11), ed. A.Papadopoulos,
IRMA Publ., Strasbourg, France 2007. pp.579-646.

\bibitem{Dub6}
Dubrovin B.,  Painlev\'e transcendents in two-dimensional topological field theory. {\it The PainlevŽ property, CRM Ser. Math. Phys.,}\/ Springer, New York, (1999)  287--412.

\bibitem{Dub7}
Dubrovin B., Geometry of $2$D topological field theories, Integrable systems and quantum groups (Montecatini Terme, 1993),
{\it Lecture Notes in Math.,}\/ {\bf 1620}, Springer, Berlin, (1996) 120--348.

\bibitem{Dub8}
Dubrovin B., On almost duality for Frobenius manifolds,
{\it Amer. Math. Soc. Transl. Ser. 2,}\/ {\bf 212},  Amer. Math. Soc., Providence, RI, (2004) 75--132.


\bibitem{DM}
Dubrovin B.A., Mazzocco M., Monodromy of certain Painlev\'e-VI transcendents and ref\/lection group,
{\it Invent. Math.} {\bf 141} (2000), 55--147.

\bibitem{DM1}
Dubrovin B.A., Mazzocco M.,
On the reductions and classical solutions of the Schlesinger equations, {\it Differential equations and quantum groups,}\/  IRMA Lect. Math. Theor. Phys., {\bf 9}, Eur. Math. Soc., Z\"urich,  (2007) 157--187.


\bibitem{Fock1}
Fock V.V., Combinatorial description of the moduli space of
projective structures, {http://arxiv.org/abs/hep-th/9312193}{hep-th/9312193}.


\bibitem{FR}
Fock, V.~V. and Rosly, A.~A.,
Moduli space of flat connections as a Poisson manifold,
Advances in quantum field theory and statistical mechanics: 2nd Italian-Russian collaboration (Como, 1996),
{\it Internat. J. Modern Phys. B} {\bf 11} (1997), no. 26-27, 3195--3206.



\bibitem{Fock2}
Fock V.V., Dual Teichm\"uller spaces, {http://arxiv.org/abs/dg-ga/9702018}{dg-ga/9702018}.


\bibitem{FST} 
Fomin S., Shapiro M., Thurston D.,
Cluster algebras and triangulated surfaces. Part I: Cluster complexes,
{http://arxiv.org/abs/math.RA/0608367}{math.RA/0608367}.

\bibitem{Gold}
Goldman W.M., Invariant functions on Lie groups and Hamiltonian
f\/lows of surface group representations, {\it Invent. Math.} {\bf85}
(1986), 263--302.

\bibitem{guzzetti}
Guzzetti D., Stokes matrices and monodromy of the quantum cohomology of projective spaces,
{\it Comm. Math. Phys.} {\bf  207}  (1999),  no. 2, 341--383.

\bibitem{guzzetti1}
Guzzetti D.,
Solving the Sixth Painleve' Equation: Towards the Classification of all the Critical Behaviours and the Connection Formulae,  arXiv:1010.1895 (2011).

\bibitem{Har}
 Harnad, J.
 Dual isomonodromic deformations and moment maps to loop algebras,
 {\it Comm. Math. Phys.}\/ {\bf 166} (1994), no. 2:337--365.

\bibitem{MJU} 
Jimbo M., Miwa T. and Ueno K.,
Monodromy preserving deformations of linear ordinary differential
equations with rational coefficients \text{I},
{\it Physica 2D}, {\textbf{2}}, (1981), no. 2, 306--352

\bibitem{MJ1} 
Jimbo M. and Miwa T.,
Monodromy preserving deformations of linear ordinary differential
equations with rational coefficients \text{II},
{\it Physica 2D,}
{\textbf{2}} (1981),
no. 3, 407--448.


\bibitem{MJ2} 
Jimbo M. and Miwa T.,
Monodromy preserving deformations of linear ordinary differential
equations with rational coefficients \text{III}",
{\it Physica 2D,}
{\textbf{4}} (1982),
no. 1, 26-46.

\bibitem{jimbo}
Jimbo, M.
Monodromy Problem and the Boundary Condition for Some Painlev\'e Equations,
{\it Publ. RIMS, Kyoto Univ.}\/ {\bf 18} (1982), 1137--1161

\bibitem{KS}
Korotkin D. and Samtleben H.,
Quantization of coset space $\sigma$-models coupled to two-dimensional gravity,
{\it Comm. Math. Phys.}\/ {\bf 190} (1997), no. 2, 411--457.


\bibitem{mazz}
Mazzocco M.
The Geometry of the Classical Solutions
of the Garnier Systems, {\it International Mathematics Research
Notices,}\/ {\bf 2002}, no.12:613--646 (2002).


\bibitem{mazz1}
Mazzocco M.
Picard and Chazy solutions to the Painlev\'e VI equation, {\it Math.Ann.} {\bf 321},157--195 (2001).


\bibitem{NR} 
Nelson J.E., Regge T., Homotopy groups and $(2{+}1)$-dimensional
quantum gravity, {\it Nucl. Phys.~B} {\bf 328} (1989), 190--199.

\bibitem{NRZ}
Nelson J.E., Regge T., Zertuche F., Homotopy groups and
$(2+1)$-dimensional quantum de~Sitter gravity, {\it Nucl. Phys.~B} {\bf 339}
(1990), 516--532.

\bibitem{Penn1}
Penner R.C., The decorated Teichm\"uller space of Riemann surfaces,
{\it Comm. Math. Phys.} {\bf113} (1988), 299--339.

\bibitem{Sch}
Schlesinger L.,
Ueber eine Klasse von Differentsial System Beliebliger
\text{Ordnung} mit \text{Festen Kritischer Punkten},
{\it J. fur Math.,}
{\bf 141}, (1912),
96--145.

\bibitem{SV}
Shramchenko, V.
Riemann-Hilbert problem associated to Frobenius manifold structures on Hurwitz spaces: irregular singularity, {\it Duke Math. J.}\/ {\bf 144} (2008), no. 1:1--52.

\bibitem{Ugaglia}
Ugaglia M., On a Poisson structure on the
space of Stokes matrices, {\it Int. Math. Res. Not.} {\bf 1999} (1999),  no.~9, 473--493,
{http://arxiv.org/abs/math.AG/9902045}{math.AG/9902045}.

\bibitem{VV}
Verlinde E. and Verlinde H.,
{\it Conformal field theory and geometric quantization},
Proc. Superstrings 1989 (Trieste, 1989), World
Scientific, River Edge, NJ, 1990, 422--449.

\bibitem{W}
Watanabe H.,
Birational Canonical Transformations and Classical Solutions of the Sixth Painlev\'e Equation, {\it Ann. Scuola Norm. Sup. Pisa Cl. Sci.}\/ {\bf 27} (1999), 379--425.


\end{thebibliography}
\end{document}